
\RequirePackage{tikz}






\documentclass[pdflatex,sn-mathphys]{sn-jnl}

\usepackage{enumitem}
\usepackage{bbm}
\usepackage[export]{adjustbox}
\usepackage{subfig}

\usetikzlibrary{arrows,automata}

\newcommand{\N}{{\mathcal{N}}}
\newcommand{\Int}{{\mathbb{N}}}
\newcommand{\Comp}{{\mathcal{C}}}
\newcommand{\ksur}{k^\textup{s}}
\newcommand{\tredist}{t^\textup{r}}

\newcommand{\Bids}{{\mathcal{B}}}
\newcommand{\X}{{\mathcal{X}}}
\newcommand{\U}{{\mathcal{U}}}
\newcommand{\Types}{{\mathcal{T}}}
\newcommand{\D}{{\mathcal{D}}}
\newcommand{\Real}{{\mathbb{R}}}
\newcommand{\reward}{{\zeta}}
\newcommand{\transition}{{\rho}}
\newcommand{\bdist}{{\nu}}
\newcommand{\oprob}{{\gamma}}
\newcommand{\kbar}{{\bar{k}}}
\newcommand{\Oset}{{\mathcal{O}}}
\newcommand{\Prob}{{\mathbb{P}}}
\newcommand{\sd}{{\boldsymbol{d}}}
\newcommand{\epi}{{\boldsymbol{\pi}}}
\newcommand{\Dkbar}{{\D^\kbar}}
\newcommand{\accessfair}{{\text{af}}}
\newcommand{\rewardfair}{{\text{rf}}}
\newcommand{\pbr}{{\tilde{\pi}}}
\newcommand{\floor}[1]{\left\lfloor #1 \right\rfloor}
\newcommand{\ceil}[1]{\left\lceil #1 \right\rceil}

\DeclareMathOperator*{\argmax}{\text{argmax}}
\DeclareMathOperator*{\E}{\mathbb{E}}
\DeclareMathOperator*{\std}{\text{std}}

\def\CENTE{\texttt{DICT}}

\newtheorem{assumption}{Assumption}
\newtheorem{lemma}{Lemma}
\newtheorem{corollary}{Corollary}


\jyear{2022}%

\theoremstyle{thmstyleone}%
\newtheorem{theorem}{Theorem}
\newtheorem{proposition}[theorem]{Proposition}%

\theoremstyle{thmstyletwo}%

\theoremstyle{thmstylethree}%
\newtheorem{definition}{Definition}%

\raggedbottom

\begin{document}

\title[ ]{A self-contained karma economy for the dynamic allocation of common resources$^\star$}

\author*[1,2]{\fnm{Ezzat} \sur{Elokda}}\email{elokdae@ethz.ch}

\author[1]{\fnm{Saverio} \sur{Bolognani}}\email{bsaverio@ethz.ch}

\author[2]{\fnm{Andrea} \sur{Censi}}\email{acensi@ethz.ch}

\author[1]{\fnm{Florian} \sur{D\"{o}rfler}}\email{floriand@ethz.ch}

\author[2]{\fnm{Emilio} \sur{Frazzoli}}\email{efrazzoli@ethz.ch}

\affil*[1]{\orgdiv{Automatic Control Laboratory}, \orgname{ETH Z\"{u}rich}, \orgaddress{\country{Switzerland}}}

\affil[2]{\orgdiv{Institute for Dynamic Systems \& Control}, \orgname{ETH Z\"{u}rich}, \orgaddress{\country{Switzerland}}}



\abstract{

This paper presents karma mechanisms, a novel approach to the repeated allocation of a scarce resource among competing agents over an infinite time. Examples include deciding which ride hailing trip requests to serve during peak demand, granting the right of way in intersections or lane mergers, or admitting internet content to a regulated fast channel. We study a simplified yet insightful formulation of these problems where at every instant two agents from a large population get randomly matched to compete over the resource. The intuitive interpretation of a karma mechanism is ``If I give in now, I will be rewarded in the future.'' Agents compete in an auction-like setting where they bid units of karma, which circulates directly among them and is self-contained in the system. We demonstrate that this allows a society of self-interested agents to achieve high levels of efficiency without resorting to a (possibly problematic) monetary pricing of the resource. We model karma mechanisms as dynamic population games and guarantee the existence of a stationary Nash equilibrium. We then analyze the performance at the stationary Nash equilibrium numerically. For the case of homogeneous agents, we compare different mechanism design choices, showing that it is possible to achieve an efficient and ex-post fair allocation when the agents are future aware. Finally, we test the robustness against agent heterogeneity and propose remedies to some of the observed phenomena via karma redistribution.

}

\keywords{Dynamic resource allocation, artificial currency mechanisms, karma mechanisms, dynamic population games}



\maketitle

\section{Introduction}
\label{sec:Intro}

The scarcity of resources is one of modern day society's most prominent challenges.
With a strong population growth and shift to urbanization, our finite natural and infrastructure resources are seeing unprecedented levels of stress.
The need to devise \emph{fair} and \emph{efficient} means of access to these resources is now more eminent than ever.

In this paper, we study a class of dynamic resource allocation problems in which an indivisible resource is repeatedly contested between two anonymous users who are randomly drawn from a large population.
Figure~\ref{fig:examples} demonstrates three motivating examples for this class of resource competitions.
\begin{enumerate}[label=(\alph*)]
    \item Due to excessive demand, only one of two trip requests from ride-hailing riders can be served by the closest ride-hailing driver. Which rider should be served? \label{ex:ridehailing}
    
    \item Two autonomous vehicles (AVs) meet at an unsignalled intersection. Which AV should go first? \label{ex:intersections}
    
    \item To improve quality of service for critical content, an internet service provider (ISP) splits its bandwidth into a high capacity fast channel and a low capacity slow channel, and dedicates the fast channel to half of the total traffic. Two internet content providers (ICPs) simultaneously request service. Which ICP should be granted to the fast channel? \label{ex:internet}
    
\end{enumerate}
\begin{figure}[bt!]
    \centering
    \subfloat[]{
        \includegraphics[height=0.3\textwidth, valign=m]{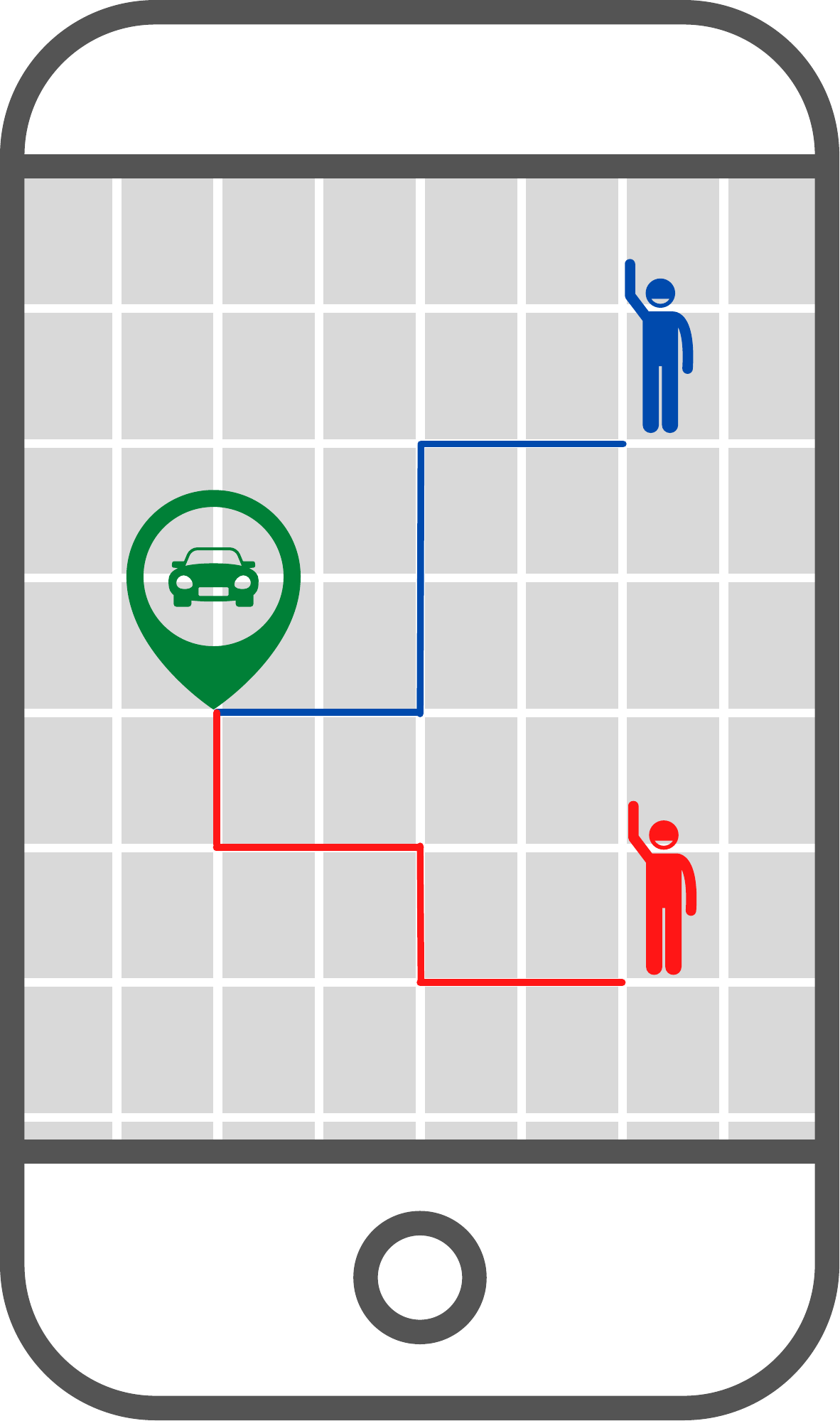}
    }
    \hfill
    \subfloat[]{
        \includegraphics[height=0.3\textwidth, valign=m]{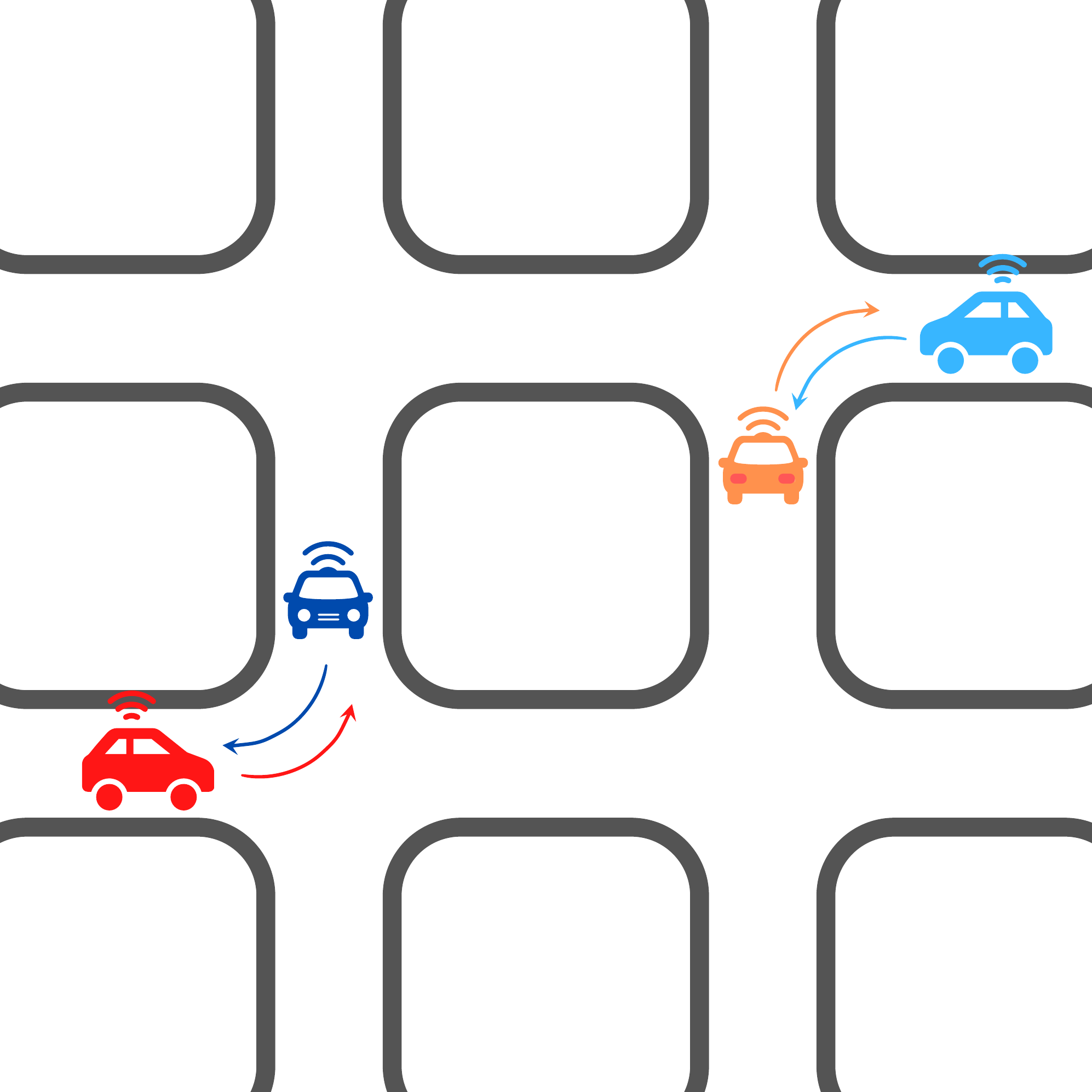}
    }
    \hfill
    \subfloat[]{
        \includegraphics[width=0.38\textwidth, valign=m]{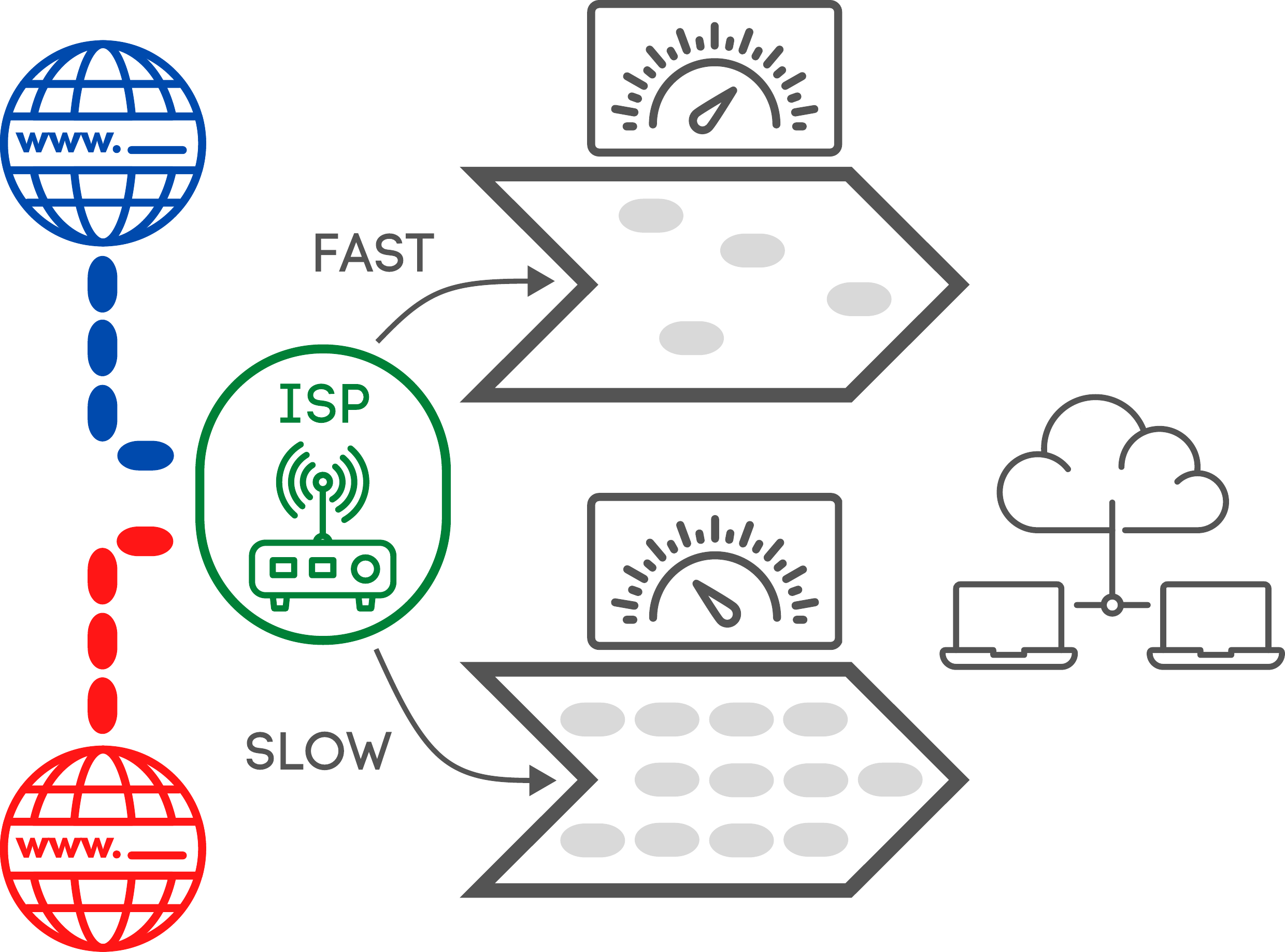}
        \vphantom{\includegraphics[height=0.3\textwidth, valign=m]{example-image-a}}
    }
    
    \caption{Three motivating examples for karma mechanisms.
    (a) When demand exceeds supply, which trip request should a ride-hailing platform assign to the next available driver?
    (b) In a future of autonomous vehicles (AVs), can the AVs self-coordinate who passes first in intersections in a fair and efficient manner?
    (c) Can an internet service provider (ISP) dedicate a fast channel for critical content in a net-neutral manner, i.e., without charging internet content providers (ICPs) differently?}
    \label{fig:examples}
\end{figure}

These examples share in common that the resource (the ride-hailing driver, the intersection, or the fast channel) is repeatedly contested over a long time horizon by anonymous users in a large population (the ride-hailing riders, the AVs, or the ICPs), who could have time-varying private needs for accessing the resource.
It is important to note that the reality of these examples is complex, and they each deserve a separate treatment.
Nonetheless, the level of abstraction chosen in this paper serves to highlight the fundamental trade-offs that arise in this class of problems, as well as to ease the presentation of the novel analysis tools developed to study them.
We adopt a level of generality in these tools that allows them to be readily specified to more complex settings.


One can draw inspiration from how small communities sometimes manage to self-organize the access to common resources~\cite{ostrom1990governing}.
Typically these communities devise systems that facilitate a form of fair `giving and taking', i.e., systems where the users take turns in accessing the resource (see, e.g., the system devised by local fishermen to manage the inshore fishery in Alanya, Turkey~\cite{berkes1986local,ostrom1990governing}).
This work is an attempt to systematize the `give and take' such that it can be applied unambiguously in a large-scale system, and builds upon the exploratory concept proposed in~\cite{censi2019today}. The main enabler is \emph{karma}, a counter that encodes the history of `giving and taking' of the users.
Loosely inspired by the notion of karma in Indian tradition~\cite{doniger1980karma}, each user is endowed with karma points which increase when the user yields the resource, and decrease when the user accesses the resource.
A user with a high level of karma was likely yielding in the past, and gets an advantage in receiving the resource now. In turn, the disfavored user that yields now gets compensated in karma which will give them an advantage in the future.

In its simplest form, such a \emph{karma mechanism} is an effective mean to facilitate \emph{turn-taking} on the large scale.
But a karma mechanism can do more than simple turn-taking.
If users are given the option to choose how much karma to use now, e.g., through an auction-like \emph{karma bidding} scheme, the karma becomes also a means to express \emph{private temporal preferences}.
For example, a user may decide to yield (and gain karma) when its urgency is low, in anticipation to the situation where accessing the resource is time-critical.
A karma mechanism can hence facilitate the allocation of the resource to whoever needs it most, i.e., the maximization of resource allocation \emph{efficiency}.

A classical device that is used to express preferences and facilitate access to resources is money.
The ride-hailing platform can raise the trip price until only one of the contesting riders is willing to pay for it -- a common practice referred to as \emph{surge pricing}~\cite{castillo2017surge,cachon2017role}.
This practice has seen some public criticism due to a lack of transparency and a tendency to raise prices in a manner that is deemed unfair~\cite{dholakia2015everyone,shontell2014uber}.
While in principle it is meant to allocate trips to the most needy riders, in practice the trips are allocated simply to those who can afford them.
In the transportation domain, decades of research on the use of monetary road pricing policies has been faced with little public enthusiasm due to concerns for equitable access to the roads~\cite{yang2011managing,xiao2013managing,brands2020tradable}.
A popular remedy is the use of \emph{tradable credits}, which are periodically issued road access tokens that are allowed to be traded in a monetary market.
While these schemes ensure that the yielding users can at least sell their credits and receive (monetary) compensation, they neglect the fact that wealthy users (i.e., those who have high `value of time'~\cite{borjesson2012income}) persist to have a systematic advantage in accessing the resource~\cite{xiao2013managing}.
Finally, the topic of \emph{net neutrality} has seen wide-spread public debate in recent years, with strong concerns that the internet will lose its integrity as an open and free resource if ISPs charge ICPs differently~\cite{obama2016net,bourreau2015net,pil2010net,hahn2006economics}.
In all of the above debates, the potential existence of a simple and efficient non-monetary solution seems to be overlooked.

Karma shares similarities with money in how it acts as a token of exchange, but has the distinguishing feature of being acquired from \emph{fair exchanges} that are relevant to the resource allocation problem at hand.
One need not worry about matters of wealth inequality, or rely on the assumption that money is a universal, objective measure of value, since karma is only acquired from the process of yielding the resource to another user.
Karma hence facilitates the design of a purpose-built, self-contained economy for the resource allocation task.
Like in monetary economies, the karma economy can be \emph{tuned} to achieve different fairness and efficiency objectives in a manner that is targeted to the specific resource allocation task, through the design of karma payment rules, the redistribution of karma, and other techniques that we explore here.

Karma mechanisms promise to be efficient and fair, but these unconventional mechanisms require a novel analysis.
A difficulty in the analysis arises due to the lack of reliance on an extrinsic measure of value. Karma does not have value a-priori, and is never used directly in the cost functions of the users.
The value of karma instead arises from how it facilitates access to the resource, and how users need to ration its use in order to cover their future resource access demands.
This makes the behavior of rational users under the karma mechanism, and the resulting social welfare, difficult to predict, and requires the formulation of non-trivial dynamic games played in large populations.
In this work, and in comparison to~\cite{censi2019today}, we develop a tractable and rigorous game-theoretic model to study karma mechanisms that is built on top of the class of \emph{dynamic population games}~\cite{elokda2021dynamic}, and prove the existence of a suitable notion of equilibrium, the \emph{stationary Nash equilibrium}.
We then utilize these technical tools to numerically investigate the strategic behaviors that emerge under the karma mechanisms, their consequences for the social welfare, as well as how the mechanisms can be tuned to achieve different resource allocation objectives.

\subsection{Related works}

\subsubsection{Repeated games}
\label{subsec:repeatedgames}
The celebrated folk theorem~\cite{friedman1971non,fudenberg1986folk} asserts that any individually rational outcome\footnote{In an individually rational outcome, the payoff of each player weakly dominates the player's security or minimax payoff.
The set of individually rational outcomes include socially efficient outcomes in many games, such as in the prisoner's dilemma.} of a finite-player single-stage game can be sustained in a Nash equilibrium of the infinitely repeated game, provided that the players are sufficiently future aware.
The constructive proofs of this classical result and many of its extensions rely on the notion of \emph{switch strategies}, in which the players initially agree on the socially desirable set of actions to play.
In case a player deviates from the agreed action, all other players effectively punish the deviator by switching to a set of actions that make the deviator worst off.
This requires the ability to both detect a deviation as well as identify the deviator.

Several extensions of the folk theorem consider when the actions of others are not perfectly observable, thereby posing a difficulty in identifying deviators.
These include when the players observe a common public outcome~\cite{fudenberg1994folk} or when they only observe private outcomes~\cite{fudenberg1991approximate}.
These works impose identifiability conditions
on the stage game which essentially guarantee that each player can identify the history of others' actions from the observable outcomes.
Another extension of the folk theorem is to \emph{stochastic games}~\cite{dutta1995folk} where the stage games are time-varying and depend on the previous actions of the players.
The difficulty in this setting is that deviations are not only immediately beneficial, but could also take subsequent games into a regime that is profitable to the deviator on the long run.
The authors impose conditions on the game which essentially guarantee that the long run cost of punishment outweighs the long run benefit of deviation.

All of these works consider that every stage game is played by the same finite set of players.
A more related setting is when two players in a large population are randomly matched in each stage, for which a folk theorem is shown in~\cite{okuno1995social}.
Each player is associated with as social status state that is observable by others.
Deviators are punished by changing their status from good to bad, which all future matched players observe and punish for.

Our setting differs fundamentally from the above cited works.
We consider that the players have time-varying private \emph{preferences}, namely their urgency to acquire a contested resource.
In contrast to folk theorem results with private information~\cite{fudenberg1991approximate}, the privacy is with respect to the payoffs of the opponents, rather than the actions they play.
In contrast to folk theorem results for stochastic games~\cite{dutta1995folk}, the time-varying nature is with respect to the private player preferences rather than a fully observable game state.
In the context of a folk theorem, the socially desirable set of actions in our setting is for the players to report their urgency truthfully such that the resource is allocated to the highest urgency player.
It is not obvious how deviation from truthfulness can be detected in the first place.
In principle, if the time-varying urgency \emph{process} is public and the game is played with the same finite set of players, non-truthfulness can be detected on the long run by correlating the history of each player's reports to the expected history~\cite{jackson2007overcoming}.
But in a large population setting where maintaining explicit histories is infeasible, we interpret karma as an extension of the social status state in~\cite{okuno1995social} to handle private preferences, essentially placing a budget on how often players can declare high urgency.

\subsubsection{Mechanism design without money}

The famous Gibbard-Satterthwaite impossibility theorem~\cite{gibbard1973manipulation,satterthwaite1975strategy} poses a fundamental challenge in the design of resource allocation mechanisms.
It asserts that when there are three or more alternative allocations, it is impossible to design a strategy-proof mechanism that is non-dictatorial when the domain of preferences is unrestricted.
One avenue to escape this impossibility is in the use of money, which imposes structure on the preferences of the users by measuring them against the objective monetary yardstick.
The problem of designing monetary mechanisms is well studied, with positive results including the Vickrey-Clarke-Groves (VCG) mechanism~\cite{vickrey1961counterspeculation,clarke1971multipart,groves1973incentives}, a general mechanism that is well known to be strategy-proof and lead to efficient allocations.
On the other hand, the design of \emph{mechanisms without money}~\cite{schummer2007mechanism} is in general more difficult due to the lack of a general instrument that can be used to align incentives.
Some successes include the cases when preferences are single-peaked~\cite{moulin1980strategy}, when each user has one item to trade~\cite{shapley1974cores}, and when matching users pairwise in a bipartite graph~\cite{gale1962college}, which all leverage specific structures in the preferences of the users that are difficult to generalize.
When users must express preferences over many alternatives, a general approach is the \emph{pseudo-market} pioneered by~\cite{hylland1979efficient} and famously adopted in the context of allocating course seats in business schools~\cite{sonmez2010course,budish2011combinatorial,budish2012multi}. In a pseudo-market, users are given a finite budget of tokens to distribute over the alternatives, whose prices (in tokens) are set/discovered to clear the market (i.e., allocate the correct amount of resources to the correct amount of users).
However, pseudo-markets only promise to be Pareto-efficient and are also not strategy-proof (although strategizing becomes difficult when there are many users and alternatives)~\cite{hylland1979efficient}.
It is noteworthy to mention that in our motivating examples, any allocation of the contested resource is Pareto-efficient.

The aforementioned difficulty stems from the fact that the classical mechanism design problem is concerned with a \emph{static} or \emph{one-shot} allocation of goods.
On the other hand, when the allocation is \emph{dynamic} or \emph{repeating over time}, new opportunities for the design of strategy-proof and efficient mechanisms present themselves.
On a conceptual level, just as how money can be used to incentivize truthful behavior (or punish non-truthfulness), a similar incentive can be achieved through a promise of future service (or denial thereof).
Despite of this intuitive notion, the role that repetition could play in mechanism design has only been recently noticed, and the literature on mechanism design for dynamic resource allocation is sparse.
A few recent works build upon the notion of ``promised utilities'', pioneered in the context of contract design in repeated relationships~\cite{spear1987repeated}.
These include~\cite{guo2020dynamic}, which develops an incentive compatible mechanism for the case when a single principal repeatedly allocates a single good to a single user, and~\cite{balseiro2019multiagent}, which extends this approach to the case when the single good is repeatedly allocated to one of the same contesting users.
The working principle of these works is to find a set of future utilities that the principal promises to the user(s) as a function of their reported immediate utility, in a manner that incentivizes truthful reporting, and while ensuring that the principal can recursively keep these future promises.
This is based on the assumption that the principal has the power to commit to the promised utilities, without specifying the exact mechanism to do so.
Similarly, karma is a device that encodes future promises; the higher a user's karma the more favorable its future position will be.
But with karma, these promises need not be made explicit, and a single principal need not be held accountable for them.
Instead, the future value of karma arises in a decentralized and natural manner as the users strategically ration its use.
The promise of future utility is made by the population as a whole by attributing the right of access to future resources to karma.

In other related works, \cite{sonmez2020incentivized} leverage the high likelihood of kidney transplant failures to incentivize participation in the kidney exchange by providing a priority for re-transplant to participants.
\cite{kim2021organ} similarly incentivize participation in the kidney exchange by issuing participants a voucher for re-transplant that is also redeemable by their offspring.
We consider our karma mechanisms to be complimentary to these works, offering a simple and intuitive alternative that has the potential to scale to large systems and across multiple applications.

\subsubsection{Artificial currency mechanisms}

A special class of mechanisms without money, which are perhaps the most related to our karma mechanisms, are the so-called \emph{artificial currency} or \emph{scrip} mechanisms.
These mechanisms have been proposed in multiple isolated application instances since the early 2000's.
\cite{golle2001incentives} propose a ``point system'' to address the problem of \emph{free-riding} in peer-to-peer networks, where agents tend to download many more files than they upload.
Similar works in the domain of peer-to-peer networks include~\cite{vishnumurthy2003karma}, who specifically call their point system ``karma'' and focus on its cryptographic implementation rather than the design of the mechanism itself; and~\cite{friedman2006efficiency}, who do incorporate elements of mechanism design but focus solely on the choice of a single parameter, which is the total amount of karma in their specific model.
In the domain of transportation, \cite{salazar2021urgency} recently demonstrated how an artificial currency (also called ``karma'') can be utilized instead of monetary tolls to achieve optimal routing in a two arc road network.
To the extent of our knowledge, the only concrete example of a real-life implementation of a karma-like concept is the ``choice system'' for the allocation of food donations to food banks in the United States~\cite{prendergast2022allocation}.
There, food banks are allocated ``shares'' which they use to bid on the food donations they need.
It is considered to be a major success as evidenced by the active participation of food banks in the system as well as the unprecedented fluidity of food donations it resulted in.

All of the above works share in common the need for a non-monetary medium of exchange to coordinate the use of shared resources.
However, there is an apparent lack of unity in the approaches taken, with most works proposing a problem-tailored, heuristic mechanism with little rigorous justification and scope for generalization.
In~\cite{prendergast2022allocation}, a model is presented that makes many simplifying assumptions on the strategic behavior of the users.
This model does not truly capture the dynamic nature of the optimization problem of the users, who must ration their use of shares now to secure their future needs.
In~\cite{salazar2021urgency}, a game-theoretic equilibrium is considered in which individual users solve a finite horizon dynamic optimization, but importantly, the amount of karma saved at the end of the horizon is treated as an exogenous parameter.
A few other works that attempt to systematically study artificial currency mechanisms include~\cite{johnson2014analyzing,gorokh2021monetary}.
\cite{johnson2014analyzing} studies a setting where a pool of users alternate between requesting and providing services to each other (e.g., a pool of parents exchanging baby-sitting services).
This differs from our work in the following fundamental aspects.
First, it does not give the users the flexibility to express intensity of preferences through a
bidding procedure.
Second, the equilibrium notion considered relies on remembering if certain users denied providing service before and punishing those users by never granting them service again.
As discussed in Section~\ref{subsec:repeatedgames}, 
retaliation schemes are crucially based on the capability of detecting defection, which in our setting cannot be done without knowing the private preference of the agents.
\cite{gorokh2021monetary} provides a general method to convert truthful monetary to non-monetary mechanisms, which relies on a central planner estimating how much money each user would spend in a finitely repeated monetary auction and giving the user a similar amount of artificial currency at the beginning of the horizon.
This requires central knowledge of the players’ private
preferences, and does not capture the important dynamic feedback process of gaining currency through yielding.
In contrast to these works, our approach shows that a robust and ultimately efficient behavior emerges only from the dynamic strategic problem faced by the users of the karma mechanism, without additional rules or coordination mechanisms.
To the extent of our knowledge, there are no other works that study the strategic behavior in artificial currency mechanisms at this level of generality.
We believe that this is fundamental for the understanding of these mechanisms and serves as an important tool for the mechanism design.

\subsection{Organization of the paper}

Section~\ref{sec:karma} introduces the setting of dynamic resource allocation and the concept of karma mechanisms.
In Section~\ref{sec:Model} we model karma mechanisms as dynamic population games, and show that a stationary Nash equilibrium is guaranteed to exist.
Section~\ref{sec:KarmaTransferModel} focuses specifically on how different karma payment and redistribution rules can be incorporated in the model.
The model is utilized in a numerical investigation of karma mechanisms in Section~\ref{sec:NumericalAnalysis}, where we provide insights on the emerging strategic behavior as well as the consequences of the karma mechanism design on the achieved efficiency and fairness of the resource allocation.

\subsection{Notation}
Let $D$ be a discrete set and $C$ be a continuous set.
Let $a,d \in D$ and $c \in C$.
For a function $f : D \times C \rightarrow \Real$, we distinguish the discrete and continuous arguments through the notation $f[d](c)$.
Alternatively, we write $f : C \rightarrow \Real^{\lvert D \rvert}$ as the vector-valued function $f(c)$, with $f[d](c)$ denoting its $d^\textup{th}$ element.
Similarly, $g[a \mid d](c)$ denotes the conditional probability of $a$ given $d$ and $c$. 
Specifically, $g[d^+ \mid d](c)$ denotes one-step transition probabilities for $d$.
We denote by $\Delta(D):=\left\{\left. p \in \Real_+^{\lvert D \rvert} \right\rvert \sum_{d \in D} p[d] = 1 \right\}$ the set of probability distributions over the elements of $D$.
For a probability distribution $p \in \Delta(D)$, $p[d]$ denotes the probability of element $d$.
When considering heterogeneous agent types, we denote by $x_\tau$ a quantity associated to type $\tau$.
\section{Karma mechanisms for dynamic resource allocation}
\label{sec:karma}


We consider a population of agents $\N = \{1,\dots,N\}$, where the number of agents $N$ is typically large. For example, $\N$ is the set of ride-hailing platform riders in the metropolitan area of interest.

At discrete global time instants $t \in \Int$, two random agents from the population (denoted by $\Comp[t] \subset \N$) compete for a scarce, indivisible resource, such as the closest ride-hailing driver
.
We are concerned with designing a mechanism that, at each interaction time $t$, selects one of the two agents in $\Comp[t]$ to allocate the resource to (i.e., grant the trip request).

A karma mechanism works as follows.
Each agent $l \in \N$ in the population is endowed with a non-negative integer counter $k^l[t] \in \Int$, called \emph{karma}, which is private to the agent.
Moreover, an additional \emph{surplus karma} counter $\ksur[t] \in \Int$ exists in the system.

At each interaction time $t$, each agent $i \in \Comp[t]$ involved in the resource competition submits a sealed non-negative integer bid $b^i[t] \in \{0,\dots,k^i[t]\}$, which is bounded by the agent's karma. The outcome of the interaction is determined by a \emph{resource allocation rule} and a \emph{payment} rule.
The \emph{resource allocation rule} decides which of the two competing agents is selected to receive the contended resource.

\smallskip
\begin{center}
    \fbox{
        \begin{minipage}{0.92\textwidth}
            \begin{center}
                \textbf{Resource allocation rule}
            \end{center}
            
            \parbox{15mm}{\textbf{Input:}} 
            sealed bids $b^i[t]$ of the two competing agents $i \in \Comp[t]$.
            
            \medskip
            
            \parbox{15mm}{\textbf{Output:}} 
            selected agent $i^*[t] \in \Comp[t]$ to receive the resource,
            
            \[
                i^*[t] = \argmax_{i \in \Comp[t]} \: b^i[t].
            \]
        \end{minipage}
    }
\end{center}
\smallskip

It is natural to consider a resource allocation rule that allocates the resource to the agent with the highest bid.
A tie breaking rule is needed when both bids coincide, and we use a fair coin toss for this purpose.

The \emph{karma payment rule} determines the karma payments of the two competing agents.

\def\paymentwinner{p^{i^*}[t]}
\def\paymentloser{p^{-i^*}[t]}

\smallskip
\begin{center}
    \fbox{
        \begin{minipage}{0.92\textwidth}
            \begin{center}
                \textbf{Karma payment rule}
            \end{center}
            
            \parbox{15mm}{\textbf{Input:}}  \begin{minipage}[t]{\textwidth}
                sealed bids $b^i[t]$ of the two competing agents $i \in \Comp[t]$;\\
                selected agent $i^*[t]$ and yielding agent $-i^*[t] = \Comp[t] \setminus i^*[t]$.
            \end{minipage}
            
            \medskip
            
            \parbox{15mm}{\textbf{Output:}}  \begin{minipage}[t]{\textwidth}
                    karma payment of the selected agent 
                    $0 \le \paymentwinner \le b^{i^*}[t]$;\\
                    karma payment of the yielding agent 
                    $-\paymentwinner \le \paymentloser \le 0$.
            \end{minipage}
        \end{minipage}
    }
\end{center}
\smallskip

Note that the yielding agent makes a non-positive payment (i.e., it receives karma). As a consequence of this payment rule, at each interaction time $t$, the karma counters are updated as follows:
\begin{align*}
    k^{i^*}[t+1] & \leftarrow k^{i^*}[t] - \paymentwinner
    && \text{(selected agent $i^*[t]$)}, \\
    k^{-i^*}[t+1] & \leftarrow k^{-i^*}[t] - \paymentloser
    && \text{(yielding agent $-i^*[t]$)}, \\
    \ksur[t+1] & \leftarrow \ksur[t] + \paymentwinner + \paymentloser
    && \text{(surplus karma)}.
\end{align*}
Examples of karma payment rules are presented in Section~\ref{subsec:KarmaTransfer}. The surplus karma $\ksur[t]$ is meant to keep track of any excess karma payment in the interaction, in case $\paymentwinner + \paymentloser \neq 0$, such that this excess gets \emph{redistributed} to the population agents.
This redistribution occurs at time instants $\tredist$ in accordance with a \emph{karma redistribution rule}.

\def\redistribution{r^l[\tredist]}

\smallskip
\begin{center}
    \fbox{
        \begin{minipage}{0.92\textwidth}
            \begin{center}
                \textbf{Karma redistribution rule}
            \end{center}
            
            \parbox{15mm}{\textbf{Input:}}  \begin{minipage}[t]{0.7\textwidth}
                    karma $k^l[\tredist]$ of all the agents $l \in \N$;\\
                    surplus karma $\ksur[\tredist]$.
            \end{minipage}
            
            \medskip
            
            \parbox{15mm}{\textbf{Output:}}  \begin{minipage}[t]{0.7\textwidth}
                    karma redistribution $\redistribution$
                    to each agent $l \in \N$.
            \end{minipage}
        \end{minipage}
    }
\end{center}
\smallskip

As a consequence of this redistribution rule, at each redistribution time $\tredist$ the karma counters are updated as follows:
\begin{align*}
    k^{l}[\tredist+1] & \leftarrow k^{l}[\tredist] + \redistribution
    && \text{(every agent $l$)}, \\
    \ksur[\tredist+1] & \leftarrow \ksur[\tredist] - \sum_l \redistribution 
    && \text{(surplus karma)}.
\end{align*}

            
            
            
            

A considerable freedom in the design of the karma payment and redistribution rules is possible.
Hereafter, we present some possible examples, and in Section~\ref{sec:NumericalAnalysis} we demonstrate how the choice of these rules affects the strategic behavior of the agents and allows the system designer to achieve different resource allocation objectives.

\subsection{Examples of karma payment rules}
\label{subsec:KarmaTransfer}

\smallskip
\begin{center}
    \fbox{
        \begin{minipage}{0.92\textwidth}
            \begin{center}
                \textbf{Pay bid to peer} ($\texttt{PBP}$)
            \end{center}
            
            The selected agent pays its bid directly to the yielding agent, i.e.,
            \[
                \paymentwinner = -\paymentloser = b^{i^*}[t].
            \]
        \end{minipage}
    }
\end{center}
\smallskip

$\texttt{PBP}$ is an example of completely peer-to-peer karma payment rules. It has the advantage of not requiring the system-level surplus karma counter $\ksur [t]$ or any system-wide karma redistribution.

            

\smallskip
\begin{center}
    \fbox{
        \begin{minipage}{0.92\textwidth}
            \begin{center}
                \textbf{Pay bid to society} ($\texttt{PBS}$)
            \end{center}
            
            The selected agent pays its bid and therefore creates surplus karma (to be later redistributed), i.e., 
            \[
                \paymentwinner = b^{i^*}[t], \quad 
                \paymentloser = 0.
            \]
        \end{minipage}
    }
\end{center}
\smallskip

In contrast to $\texttt{PBP}$, $\texttt{PBS}$ is an example of a karma payment rule in which surplus karma is generated and needs to be redistributed.
We will demonstrate in the numerical analysis in Section~\ref{sec:NumericalAnalysis} that such a redistributive scheme can lead to higher levels of efficiency and fairness of the resource allocation.

            


\subsection{Examples of karma redistribution rules}

There are a plethora of methods to redistribute the surplus karma to the agents in the population, as the designer has the freedom to decide both the redistribution times $\tredist$ and the redistribution rule. 
Nevertheless, we assume that redistribution rules are intended to completely redistribute the surplus karma so that most of the karma in the system is held by the agents, i.e., the total karma held by the agents is a \emph{preserved quantity}. We will formalize this assumption in the analysis in Section~\ref{sec:KarmaTransferModel}, where we will effectively assume that the surplus is kept at zero (either by not generating surplus or by redistributing it immediately).

One possibility is for the time instants $\tredist$ to be periodic events in which the redistribution occurs (e.g., every day at midnight).
If $\ksur[\tredist]$ is not an integer multiple of the number of agents $N$, 
then a remainder is left to be redistributed in the next period.
Another possibility is for the redistribution to occur asynchronously whenever $\ksur[t]$ exceeds $N$, so that a unit of karma per agent can be transferred. 
Finally, a non-uniform (possibly randomized) redistribution is possible.


Beyond these examples, we will not detail here the specifics of the karma redistribution rule.
%
%
However, it is interesting to notice that redistributions need not to be limited to positive karma values. For example, it is possible to reduce the karma of each agent according to a non-decreasing ``tax'' function $0 \leq h[k] \leq k$, in order to create surplus (which will then be redistributed). 
We will briefly discuss the consequences of such a design in the numerical analysis in Section~\ref{sec:NumericalAnalysis}.

\section{A game-theoretic model for karma mechanisms}
\label{sec:Model}

In this section, we develop a game-theoretic model to facilitate the analysis of karma mechanisms. The goal of the model is to address the following points:
\begin{enumerate}
    \item Karma mechanisms induce a strategic scenario in which the agents must strategically choose their karma bids.
    The model will serve to demonstrate that this strategic scenario is well-posed by showing the existence of a suitable notion of equilibrium (the stationary Nash equilibrium, see Sections~\ref{subsec:Nash}--\ref{subsec:NashExists}).

    
    \item The model enables a computational tool to compute stationary best-response behavior of the agents. This tool can be used by the agents to derive their optimal bidding strategies.
    
    \item The specifics of the karma mechanism (e.g., the karma payment and redistribution rules) affect the strategic behavior of the agents, which in turn affects population-level design objectives, in non-trivial ways.
    The model will serve as an important tool to make mechanism design choices, as demonstrated in the numerical analysis in Section~\ref{sec:NumericalAnalysis}.
    
\end{enumerate}

The game played under the karma mechanism has a number of complicating features: it is an infinite dynamic game involving a large number of anonymous agents who have private states which depend on their past actions and, in turn, affect their future available actions.
For this purpose, we build our model on the class of \emph{dynamic population games}, following the formalism of~\cite{elokda2021dynamic}.

\subsection{The karma dynamic population game}
\label{subsec:KarmaDPG}

\subsubsection{Population model}

We consider that the number of agents $N$ is large such that they approximately form a \emph{continuum of mass}.
This is reasonable to consider for many of our envisioned applications, e.g., the number of ride-hailing riders in a metropolitan area is typically large.
We take the point of view of an \emph{ego agent} playing \emph{against the population} from which a random anonymous opponent is uniformly drawn in every resource competition instance.
Let $i$ be the identity of the ego agent, and $t^i$ be the time instants at which the ego agent is involved in the resource competition, i.e., $i \in \Comp[t^i]$ for all $t^i$.
These are the only time instants of relevance to the ego agent, and therefore we model its state dynamics as a discrete-time Markov chain, with the discrete update events occurring at the global times $t^i$.
To simplify notation, we will drop the explicit time dependency since the only time instances of interest are a \emph{current time} of the ego agent $t^i$ and a \emph{next time} of the ego agent $t^{i+} = \min \{t > t^i \mid i \in \Comp[t]\}$.
We will also drop the superscript $i$ since all quantities belong to the ego agent, unless explicitly stated otherwise.
For example, we write $k$ instead of $k^i[t^i]$ to denote the ego agent's current karma, and $k^+$ instead of $k^i[t^{i+}]$ to denote its next karma.

\subsubsection{Agents' type and state}

Each ego agent has a \emph{private static type} $\tau \in \Types = \{\tau_1,\dots,\tau_{n_\tau}\}$.
The distribution of the agents' types in the population is specified by the parameter $g \in \Delta(\Types)$, with $g_\tau$ denoting the fraction of agents belonging to type $\tau$.

Moreover, each ego agent has a \emph{private time-varying state} $x$ which consists of an \emph{urgency state} $u$ and the \emph{karma} $k$ of the agent, i.e.,
\begin{align*}
    x = [u, k] &\in \X = \U \times \Int, & u &\in \U = \{u_1,\dots,u_{n_u}\}, & k &\in \Int.
\end{align*}

The urgency state represents a private valuation for the resource and takes one of the values in the discrete and finite set $\U$.
It therefore corresponds to the cost incurred by the agent when they cannot procure the resource.
For example, each value in $\U$ could correspond to different classes of trips and how important it is that the agent secures a ride-hail for those trips.
The urgency at consecutive resource competition instances of an ego agent of type $\tau$ follows an exogenous, irreducible Markov chain process with transition probabilities denoted by
\begin{align}
    \label{eq:urgency-process}
    \phi_\tau[u^+ \mid u].
\end{align}
This process allows to model different assumptions on the temporal preferences of the agent.
For example, a static urgency process models that the agent has an equal need for the resource at all times.
Alternatively, the process can encode that the agent experiences exogenous events of high urgency where its inconvenience for failing to acquire the resource is elevated
(e.g., the cost of failing to secure a ride-hailing trip as a function of the trip length or purpose).
While it is possible to extend the analysis to the case where the agent's next urgency is affected by the local outcome of the interaction (e.g., failing to secure a ride-hailing trip today leads to higher urgency tomorrow due to accumulated delay), we insist on considering the urgency to be a function of exogenous events (e.g., whether the agent must catch a flight today).
Moreover, we assume that the urgency processes of different agents are statistically independent.

\subsubsection{Social state}

The joint distribution of the agents' types and states in the population is given by
\begin{align}
    \label{eq:StateDistSet}
    d \in \D = \left\{d \in \Real_+^{n_\tau \times \infty} \left\lvert \; \forall \tau \in \Types, \; \sum_{u,k} d_\tau[u,k] = g_\tau \right. \right\},
\end{align}
where $d_\tau[u,k]$ denotes the fraction of agents in type-state $[\tau, u,k]$.
The type-state distribution $d$ is a \emph{time-varying} quantity whose dynamics evolve in terms of the global times rather than the specific time instants of the ego agent.
However, we will be looking for conditions where this distribution is \emph{stationary} and therefore this difference is inconsequential.

The action of the ego agent is a non-negative integer bid which is limited by its karma
\[
    b \in \Bids^k := \{b \in \Int \mid b \leq k\}.
\]
The ego agent of type $\tau$ chooses its bid according to the \emph{homogeneous policy} of its type
\[
    \pi_\tau : \X \rightarrow \Delta(\Bids^k) = \left\{\sigma \in \Real_+^{k+1} \left\lvert \; \sum_b \sigma[b] = 1 \right. \right\},
\]
which maps its state $[u,k]$ to a probability distribution over the bids $b$.
We denote by $\pi_\tau[b \mid u,k]$ the probability of bidding $b$ when the agent of type $\tau$ is in state $[u,k]$.
The concatenation of the policies of all types $\pi=(\pi_{\tau_1},\dots,\pi_{\tau_{n_\tau}})$ is simply referred to as the policy.
The set of policies is denoted by $\Pi$.

The pair $(d,\pi) \in \D \times \Pi$ is referred to as the \emph{social state}\footnote{This terminology is adapted from~\cite{sandholm2010population}, where the social state refers to the distribution of actions played in the static population. Notice that we must also account for the distribution of states in our dynamic setting.}, as it gives a macroscopic description of the distribution of the agents' states, as well as how they behave. 

In order to characterize the Markov decision process that the ego agent faces, we will now turn to define an \emph{immediate reward function} $\reward_\tau[u,k,b](d,\pi)$ and a \emph{state transition function} $\transition_\tau[u^+,k^+ \mid u,k,b](d,\pi)$.
When the ego agent of type $\tau$ is in state $[u,k]$ in the current resource competition instance and it bids $b$, $\reward_\tau[u,k,b](d,\pi)$ gives its expected immediate reward, and $\transition_\tau[u^+,k^+ \mid u,k,b](d,\pi)$ gives the probability that, at its next resource competition, its state is $[u^+,k^+]$. 
Both the immediate reward and the state transition are functions of the social state $(d,\pi)$.

\subsubsection{Immediate reward function}

Since the ego agent gets matched with a random opponent from the population, an important quantity is the \emph{distribution of other agents' bids}, which can be readily derived from the social state as
\begin{align}
    \label{eq:Prob-bj}
    \bdist[b'](d,\pi) = \sum_{\tau', u', k'} d_\tau[u',k'] \: \pi_\tau[b' \mid u',k'],
\end{align}
where $b'$ (similarly, $\tau'$, $u'$, $k'$) denotes that these quantities belong to agents other than the ego agent.
On a fundamental level, the ego agent is playing a game against this distribution, since it determines the likelihood of being selected to receive the resource for a given bid $b$, as well as the likelihood of transitioning to the next karma $k^+$.
Let us denote the \emph{resource competition outcome} to the ego agent by $o \in \Oset = \{0,1\}$, where $o=0$ means that it is selected and $o=1$ that it is yielding.
Conditional on its bid $b$ and the opposing bid $b'$, the ego agent has the following probability of being selected
\begin{align}
    \label{eq:ProbWinConditionBj}
    \Prob[o=0 \mid b, b'] = \begin{cases}
        1, &\text{if } b > b', \\
        0, &\text{if } b < b', \\
        0.5, &\text{if } b = b',
    \end{cases}
\end{align}
which lets us compute the probability of its resource competition outcome given its bid as a function of the social state as
\begin{align}
    \label{eq:ProbWin}
    \oprob[o \mid b](d,\pi) = \sum_{b'} \bdist[b'](d,\pi) \: \Prob[o \mid b, b'].
\end{align}

The ego agent incurs a cost equal to its urgency $u$ when it yields the resource $(o=1)$, and zero cost otherwise $(o=0)$. This allows us to define the immediate reward function as
\begin{align}
    \label{eq:KarmaRewards}
    \reward_\tau[u,k,b](d,\pi) = \reward[u,b](d,\pi) = -u \: \oprob[o=1 \mid b](d,\pi),
\end{align}
which is negated to denote a reward rather than a cost. Note that it only depends on the urgency $u$ and the bid $b$ and and not on the type $\tau$ or the karma $k$.
Note also that it is continuous in the social state $(d,\pi)$.

\subsubsection{State transition function}
\label{subsubsec:StateTransitions}

The urgency of the ego agent at its next resource competition instance follows the exogenous process $\phi_\tau[u^+ \mid u]$.
In contrast, the ego agent's next karma depends on multiple factors, including its current bid, the resource competition outcome, the specifics of the karma payment rule, and whether a karma redistribution event occurs before its next resource competition instance (see Figure~\ref{fig:timing}).
We abstract this dependency with the \emph{karma transition function}
\begin{align}
    \label{eq:KarmaTransitions}
    \kappa[k^+ \mid k,b,o](d,\pi),
\end{align}
which lets us express the state transition function as
\begin{align}
    \label{eq:KarmaStateTransitions}
    \transition_\tau[u^+,k^+ \mid u, k, b](d,\pi) = \phi_\tau[u^+ \mid u] \sum_o \oprob[o \mid b](d,\pi) \: \kappa[k^+ \mid k,b,o](d,\pi).
\end{align}

\begin{figure}[bt!]
    \centering
    \begin{tikzpicture}[>=stealth']
        \draw[->, thick] (0,0) -- (10,0) node[right] {\scriptsize $t$};
        \draw (0.5,0.1) -- (0.5,-0.1) node[below] {\footnotesize $1$};
        \draw[blue, thick] (1.5,0.2) -- (1.5,-0.2) node[below] {\scriptsize $t^i_1=2$};
        \draw[blue, thick] (1.5,-0.2) -- (1.5,0.2) node[above] {\scriptsize $k^i[t^i_1]$};
        \draw (2.3,0.1) -- (2.3,-0.1) node[below] {\footnotesize $3$};
        \draw (2.8,0.1) -- (2.8,-0.1);
        \draw (3,0.1) -- (3,-0.1);
        \draw (3.5,0.1) -- (3.5,-0.1) node[below] {\footnotesize $\phantom{1}\cdots\phantom{1}$};
        \draw (3.5,-0.1) -- (3.5,0.1) node[above] {\scriptsize (other interactions)};
        \draw (4.5,0.1) -- (4.5,-0.1);
        \draw (5.1,0.1) -- (5.1,-0.1);
        \draw (5.2,0.1) -- (5.2,-0.1);
        \draw[blue, thick] (5.5,0.2) -- (5.5,-0.2) node[below] {\scriptsize $t^i_2$};
        \draw[blue, thick] (5.5,-0.2) -- (5.5,0.2) node[above] {\scriptsize $k^i[t^i_2]$};
        \draw (6.2,0.1) -- (6.2,-0.1);
        \draw (6.9,0.1) -- (6.9,-0.1);
        \draw[orange, thick] (7.25,0.2) -- (7.25,-0.2) node[below] {\scriptsize $\tredist$};
        \draw[orange, thick] (7.25,-0.2) -- (7.25,0.2) node[above] {\scriptsize redistribution};
        \draw (8,0.1) -- (8,-0.1);
        \draw (8.5,0.1) -- (8.5,-0.1);
        \draw[blue, thick] (9,0.2) -- (9,-0.2) node[below] {\scriptsize $t^i_3$};
        \draw[blue, thick] (9,-0.2) -- (9,0.2) node[above] {\scriptsize $k^i[t^i_3]$};
        \draw (9.5,0.1) -- (9.5,-0.1) node[below] {\footnotesize $\phantom{1}\cdots\phantom{1}$};
        
        \draw[->, blue] (1.75,0.8) to [bend left=20] node[midway, above] {\scriptsize $-p^i[t^i_1]$} (5.25,0.8);
        \draw[->, blue] (5.75,0.8) to [bend left=20] node[midway, above] {\scriptsize $-p^i[t^i_2] \color{orange} + r^i[\tredist]$} (8.75,0.8);
    \end{tikzpicture}

    \caption{Timeline of the resource competition instances, highlighting the times relevant for modelling an ego agent $i$'s karma Markov chain. A redistribution event could affect the karma transitions.}
    \label{fig:timing}
\end{figure}
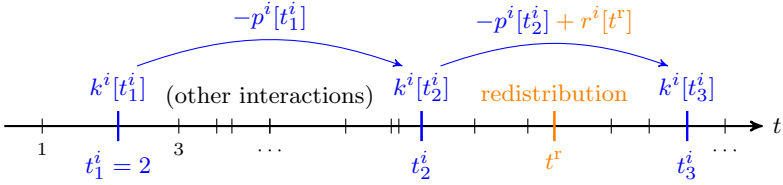

Section~\ref{sec:KarmaTransferModel} details the specifics of how to derive the karma transition function~\eqref{eq:KarmaTransitions} for the cases of \emph{pay bid to peer} ($\texttt{PBP}$) and \emph{pay bid to society} ($\texttt{PBS}$).
Here, we highlight two general properties of~\eqref{eq:KarmaTransitions}--\eqref{eq:KarmaStateTransitions} that are necessary for the main technical results to follow.


\begin{assumption}[Continuity of the state transition function]
\label{assumptions:continuity}
The state transition function $\transition_\tau[u^+,k^+ \mid u, k, b](d,\pi)$ defined in \eqref{eq:KarmaStateTransitions} is continuous in $(d,\pi)$.
\end{assumption}

The continuity assumption is barely restrictive since the dependency on $(d,\pi)$ typically arises in the form of expectations, as is demonstrated in Section~\ref{sec:KarmaTransferModel}.

\begin{assumption}[Karma preservation in expectation]
\label{assumption:KP}
Karma is preserved for all $(d,\pi)$, when taking the expectation over the entire population, i.e., 
\begin{align}
    \label{eq:KarmaPreservationDPGCompact}
    &\E_{\substack{[\tau,u,k] \sim d \\b \sim \pi_\tau[\cdot \mid u,k]}}[k^+] = \E_{[\tau,u,k] \sim d}[k],
\end{align}
which expands to
\begin{multline}    
    \sum_{\tau,u,k} d_\tau[u,k] \sum_b \pi_\tau[b \mid u,k] \sum_o \oprob[o \mid b](d,\pi) \sum_{k^+} \kappa[k^+ \mid k,b,o](d,\pi) \: k^+ \\
    = \sum_{\tau,u,k} d_\tau[u,k] \: k. \tag{KP} \label{eq:KarmaPreservationDPG}
\end{multline}
\end{assumption}

Intuitively, Assumption~\ref{assumption:KP} requires that the karma held by the agents is preserved, either because no surplus karma is generated or because it is promptly redistributed.
This can be guaranteed by appropriate design of the payment rules or it can be achieved under some assumptions on the karma redistribution scheme, as is demonstrated in Section~\ref{sec:KarmaTransferModel}.



\subsection{Solution concept: stationary Nash equilibrium}
\label{subsec:Nash}

\subsubsection{Best response}
We assume that the ego agent of type $\tau$ discounts its future rewards with the discount factor $\alpha_\tau \in [0,1)$.
Then, the expected immediate reward of the ego agent of type $\tau$ when it follows the policy $\pi_\tau$ is
\[
    R_\tau[u,k](d,\pi) = \sum_b \pi_\tau[b \mid u,k] \: \reward[u,b](d,\pi),
\]
and its state transition probabilities are
\[
    P_\tau[u^+,k^+ \mid u,k](d,\pi) = \sum_b \pi_\tau[b \mid u,k] \: \transition_\tau[u^+,k^+ \mid u,k,b](d,\pi).
\]
The \emph{expected infinite horizon reward} is therefore recursively defined as
\begin{multline}
    \label{eq:V-function-full}
    V_\tau[u,k](d,\pi) = R_\tau[u,k](d,\pi) \\
    + \alpha_\tau \sum_{u^+,k^+} P_\tau[u^+,k^+ \mid u,k](d,\pi) \: V_\tau[u^+,k^+](d,\pi).
\end{multline}
Equation~\eqref{eq:V-function-full} is the well-known Bellman recursion for the fixed policy $\pi_\tau$. We next show that it has a unique solution that is continuous in $(d,\pi)$.

\begin{lemma}
Let Assumption~\ref{assumptions:continuity} hold. Then the solution of~\eqref{eq:V-function-full} is unique and continuous in $(d,\pi)$.
\end{lemma}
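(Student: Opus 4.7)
The plan is to view \eqref{eq:V-function-full} as a fixed-point equation for the Bellman operator on a suitable Banach space, apply the Banach contraction mapping theorem for uniqueness, and then leverage a standard ``continuous dependence of the fixed point on a parameter'' argument for the continuity in $(d,\pi)$.

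Concretely, for each fixed $\tau$, I would work in the Banach space $\mathcal{V}_\tau$ of bounded functions $V:\U\times\Int\to\Real$ equipped with the sup-norm. Since $\U$ is finite and the urgency is bounded, the immediate reward $\reward[u,b](d,\pi)$ is uniformly bounded (by $\max_{u\in\U} u$), hence so is $R_\tau[u,k](d,\pi)$. For any fixed social state $(d,\pi)$, define the operator $T_{\tau,(d,\pi)}:\mathcal{V}_\tau\to\mathcal{V}_\tau$ by
\begin{equation*}
(T_{\tau,(d,\pi)}V)[u,k] = R_\tau[u,k](d,\pi) + \alpha_\tau \sum_{u^+,k^+} P_\tau[u^+,k^+\mid u,k](d,\pi)\, V[u^+,k^+].
\end{equation*}
Boundedness of $T_{\tau,(d,\pi)}V$ follows from boundedness of $R_\tau$ and from $P_\tau$ being a stochastic kernel. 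The contraction property is standard: for any $V_1,V_2\in\mathcal{V}_\tau$, the $R_\tau$ terms cancel and the stochastic kernel is an $\ell^\infty$-non-expansion, so $\|T_{\tau,(d,\pi)}V_1 - T_{\tau,(d,\pi)}V_2\|_\infty \le \alpha_\tau \|V_1-V_2\|_\infty$ with $\alpha_\tau<1$. The Banach fixed-point theorem then yields a unique solution $V_\tau^*(d,\pi)$ of \eqref{eq:V-function-full}.

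For continuity in $(d,\pi)$, I would invoke the following standard consequence of the contraction mapping theorem: if a family of $\alpha$-contractions $\{T_\lambda\}$ on a complete metric space satisfies $\lambda \mapsto T_\lambda V$ continuous for every fixed $V$, then the fixed point $V^*(\lambda)$ is continuous in $\lambda$. The argument is a two-line computation,
\begin{equation*}
\|V^*(\lambda_1)-V^*(\lambda_2)\|_\infty \le \tfrac{1}{1-\alpha_\tau}\|T_{\lambda_1}V^*(\lambda_2) - T_{\lambda_2}V^*(\lambda_2)\|_\infty,
\end{equation*}
using the triangle inequality together with the contraction bound. It then remains to verify that, for any fixed bounded $V$, the map $(d,\pi)\mapsto T_{\tau,(d,\pi)}V$ is continuous in sup-norm. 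Continuity of $R_\tau$ follows from Assumption~\ref{assumptions:continuity} (via $\reward$) together with the fact that $u$ ranges over a finite set and $\pi_\tau[\,\cdot\mid u,k]$ is a probability vector; continuity of the expectation of $V$ against $P_\tau[\,\cdot\mid u,k](d,\pi)$ follows from Assumption~\ref{assumptions:continuity} combined with the boundedness of $V$ via a dominated-convergence-type argument.

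The main technical subtlety, and thus the step I expect to be the most delicate, is precisely this last point: because the karma state space $\Int$ is countably infinite, pointwise continuity of the transition function in $(d,\pi)$ must be strengthened to uniform-in-$(u,k)$ continuity of the expected value $\sum_{u^+,k^+} P_\tau[u^+,k^+\mid u,k](d,\pi)V[u^+,k^+]$. This requires tightness-type control on the karma transition kernels $\kappa$, which in the concrete models of Section~\ref{sec:KarmaTransferModel} holds because $k^+$ differs from $k$ only by bounded per-stage payments/redistributions. Once this uniform continuity is secured, the fixed-point continuity argument closes the proof.
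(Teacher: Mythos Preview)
Your proposal is correct and essentially identical to the paper's proof: the paper also works in $\ell^\infty$, shows the Bellman operator is an $\alpha_\tau$-contraction, and then obtains continuity via exactly the triangle-inequality-plus-contraction estimate you wrote down (with the same $(1-\alpha_\tau)^{-1}$ factor). Your flag about needing uniform-in-$(u,k)$ sup-norm continuity of $(d,\pi)\mapsto T_{\tau,(d,\pi)}V$ is well-placed; the paper simply asserts this step (``$T_\tau^{(d,\pi)}(v)$ is continuous in $(d,\pi)$ for any fixed $v$'') without explicitly addressing the infinite-state subtlety you identify.
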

\begin{proof}
First we show uniqueness.
Let $V_\tau(d,\pi)$ be the vector formed by stacking $V_\tau[x](d,\pi)$ for all $x \in \X$. It is straightforward to show that $\|V_\tau(d,\pi)\|_\infty \leq \frac{u_\textup{max}}{1 - \alpha}$, where $u_\textup{max}$ is the maximal element of the finite set $\U$\footnote{This bound holds at a worst case where the ego agent yields with urgency $u_\textup{max}$ all the time.}.
Therefore, $V_\tau(d,\pi)$ lies in the Banach space of bounded sequences $(\ell^\infty, \|\cdot\|_\infty)$.
For a fixed social state $(d,\pi)$, let $T_\tau^{(d,\pi)} : \ell^\infty \rightarrow \ell^\infty$ be the map defined by the right-hand side of~\eqref{eq:V-function-full} (in vector form), i.e., $T_\tau^{(d,\pi)}(v) = R_\tau(d,\pi) + \alpha_\tau \: P_\tau(d,\pi) \: v$.
Observe that $V_\tau(d,\pi)$ is a fixed point of $T_\tau^{(d,\pi)}$, which we show to be unique by showing that $T_\tau^{(d,\pi)}$ is a contraction mapping, i.e.,
\begin{align}
    &\|T_\tau^{(d,\pi)}(v) - T_\tau^{(d,\pi)}(v')\|_{\infty} = \|\alpha_\tau \: P_\tau(d,\pi) \: (v - v')\|_{\infty} \notag \\
    &\quad= \alpha_\tau \: \max_x \left\lvert \sum_{x^+} P_\tau[x^+ \mid x](d,\pi) \: (v[x^+] - v'[x^+]) \right\rvert \notag \\
    &\quad \leq \alpha_\tau \: \max_x \sum_{x^+} \lvert P_\tau[x^+ \mid x](d,\pi) \: (v[x^+] - v'[x^+]) \rvert \notag \\
    &\quad \leq \alpha_\tau \left(\max_x \sum_{x^+} P_\tau[x^+ \mid x](d,\pi) \right) \left(\max_{x^+} \lvert v[x^+] - v'[x^+]\rvert \right) = \alpha_\tau \: \|v - v'\|_{\infty}. \label{eq:V-contraction}
\end{align}
Since $\alpha_\tau \in [0,1)$, this proves that $T_\tau^{(d,\pi)}$ is contractive.

Consider next the normed space $(\D \times \Pi, \|\cdot\|)$ (with an arbitrary norm) and the function $V_\tau : \D \times \Pi \rightarrow \ell^\infty$ defined as the unique fixed point of $T_\tau^{(d,\pi)}$.
We show that $V_\tau$ is continuous at every $(d,\pi) \in \D \times \Pi$.
Fix $\epsilon > 0$.
Choose $\epsilon' = (1 - \alpha_\tau) \: \epsilon$ and $\delta > 0$ such that $\|(d,\pi) - (d',\pi')\| < \delta \Rightarrow \|T_\tau^{(d,\pi)}(V_\tau(d,\pi)) - T_\tau^{(d',\pi')}(V_\tau(d,\pi))\|_\infty < \epsilon'$.
Such a $\delta$ is guaranteed to exist since $T_\tau^{(d,\pi)}(v)$ is continuous in $(d,\pi)$ for any fixed $v$ ($R_\tau(d,\pi)$ is continuous, and so is $P_\tau(d,\pi)$ as a consequence of Assumption~\ref{assumptions:continuity}).
Then, we have for any $(d',\pi')$ such that $\|(d,\pi) - (d',\pi')\| < \delta$,
\begin{align*}
    &\|V_\tau(d,\pi) - V_\tau(d',\pi')\|_\infty \\
    &\quad= \|V_\tau(d,\pi) - T_\tau^{(d',\pi')}(V(d,\pi)) + T_\tau^{(d',\pi')}(V_\tau(d,\pi)) - V_\tau(d',\pi')\|_\infty \\
    &\quad\leq \|T_\tau^{(d,\pi)}(V_\tau(d,\pi)) - T_\tau^{(d',\pi')}(V_\tau(d,\pi))\|_\infty \\
    &\quad\phantom{\leq}\quad+ \|T_\tau^{(d',\pi')}(V_\tau(d,\pi)) - T_\tau^{(d',\pi')}(V_\tau(d',\pi'))\|_\infty \\
    &\quad < (1 - \alpha_\tau) \: \epsilon + \alpha_\tau \: \|V_\tau(d,\pi) - V_\tau(d',\pi')\|_\infty,
\end{align*}
where the last inequality follows from~\eqref{eq:V-contraction}. Manipulating yields $\|V_\tau(d,\pi) - V_\tau(d',\pi')\|_\infty < \epsilon$, showing continuity.
\end{proof}

The ego agent's \emph{single-stage deviation reward} (commonly known as the \emph{Q-function}) is
\begin{multline}
    \label{eq:SingleStageDeviation}
    Q_\tau[u,k,b](d,\pi) = \reward[u,b](d,\pi) \\
    + \alpha_\tau \sum_{u^+,k^+} \transition_\tau[u^+,k^+ \mid u,k,b](d,\pi) \: V_\tau[u^+,k^+](d,\pi),
\end{multline}
which is the expected infinite horizon reward when the ego agent deviates from the policy $\pi_\tau$ for a single resource competition instance by bidding $b$ at the state $[u,k]$, then follows $\pi_\tau$ in the future resource competition instances.
Consequently, the \emph{state-dependent best response correspondence} of the ego agent is
\begin{multline}
    \label{eq:BestResponse}
    B_\tau[u,k](d,\pi) \\
    \in \left\{\sigma \in \Delta(\Bids^k) \left\lvert \; \forall \sigma' \in \Delta(\Bids^k), \; \sum\limits_b \left(\sigma[b] - \sigma'[b] \right) Q_\tau[u,k,b](d,\pi) \geq 0 \right. \right\}. \tag{BR}
\end{multline}
This is the set of probability distributions over the bids maximizing the expected single-stage deviation reward of the ego agent of type $\tau$ when its state is $[u,k]$ and the social state is $(d,\pi)$.

\subsubsection{Stationary Nash equilibrium}
We are now ready to define the solution concept that we adopt for this game.
\begin{definition}[Stationary Nash equilibrium]
\label{def:StationaryEquilibrium}
A stationary Nash equilibrium is a social state $(\sd,\epi) \in \D \times \Pi$ which satisfies for all $[\tau,u,k] \in \Types \times \U \times \Int$
\begin{align}
    \sd_\tau[u,k] &= \sum_{u^-,k^-} \sd_\tau[u^-,k^-] \: P_\tau[u,k \mid u^-,k^-](\sd,\epi), \label{eq:SNE-1} \tag{SNE.1} \\
    \epi_\tau[\cdot \mid u,k] &\in B_\tau[u,k](\sd,\epi). \label{eq:SNE-2} \tag{SNE.2}
\end{align}
\end{definition}
The stationary Nash equilibrium is similar to the classical notion of the Nash equilibrium in that it denotes a state of the game where agents have no incentive to unilaterally deviate from the equilibrium policies of their types $\epi_\tau$~\eqref{eq:SNE-2}, but additionally requires that the type-state distribution $\sd$ is \emph{stationary} under the stochastic processes characterized by the transition probabilities $P_\tau[u^+,k^+ \mid u,k](\sd,\epi)$~\eqref{eq:SNE-1}.
This stationarity condition implies that the ego agent need not consider the dynamics of the type-state distribution $\sd$ in its strategic behavior. Moreover, since the number of agents is large, the ego agent cannot unilaterally alter $\sd$ to further improve its rewards. Therefore, the equilibrium policies $\epi_\tau$ are indeed present and future optimal.

\subsection{Existence of stationary Nash equilibrium}
\label{subsec:NashExists}

In~\cite{elokda2021dynamic}, it is shown that a stationary Nash equilibrium is guaranteed to exist in every dynamic population game when the state space $\X$ is finite. 
We now extend this result to the karma dynamic population game, where the state space is countably infinite due to the karma state $k \in \Int$.
Observe that the set of type-state distributions $\D$, given in~\eqref{eq:StateDistSet}, is a convex subset of the Banach space of finitely summable infinite sequences $(\ell^1,\|\cdot\|_1)$.
This is because the elements $d \in \D$ can be represented as the infinite sequence $\{\sigma[n]\}_{n \in \Int}$ with
\begin{multline}
    \label{eq:Sequence}
    (\sigma[0],\dots,\sigma[n_\tau-1],\sigma[n_\tau],\dots,\sigma[n_\tau\:n_u-1],\sigma[n_\tau\:n_u],\dots) \\
    =(d_{\tau_1}[u_1,0],\dots,d_{\tau_{n_\tau}}[u_1,0],d_{\tau_1}[u_2,0],\dots,d_{\tau_{n_\tau}}[u_{n_u},0],d_{\tau_1}[u_1,1],\dots).
\end{multline}
Trivially, this sequence is finitely summable, with $\sum_n \lvert\sigma[n]\rvert = \sum_{\tau,u,k} d_\tau[u,k] = 1$.

Let us further restrict $\D$ to the subset of type-state distributions which respect a fixed average amount of karma $\kbar \in \Int$, denoted by:
\begin{align}
    \label{eq:Dkbar}
    \Dkbar = \left\{d \in \D \left\lvert \; \sum_{\tau, u, k} d_\tau[u,k] \: k= \kbar \right. \right\}.
\end{align}
This is also a convex subset of $\ell^1$. Furthermore, it is compact in $\ell^1$, as we show next
using the following auxiliary definition and lemma.

\begin{definition}[Equismall at infinity, \cite{treves1967topological}~p.451]
A subset $\Sigma$ of $\ell^1$ is said to be \emph{equismall at infinity} if, for every $\epsilon > 0$, there is an integer $n_\epsilon \geq 0$ such that
\[
    \sum_{n \geq n_\epsilon} \lvert\sigma[n]\rvert < \epsilon, \quad \text{for all } \sigma \in \Sigma.
\]
\end{definition}

\begin{lemma}[Compactness in $\ell^1$, {\cite[Theorem~44.2]{treves1967topological}}]
\label{th:Compactl1}
The following properties of a subset $\Sigma$ of $\ell^1$ are equivalent:
\begin{enumerate}[label=\alph*)]
    \item $\Sigma$ is compact;
    \item $\Sigma$ is bounded, closed, and equismall at infinity.
\end{enumerate}
\end{lemma}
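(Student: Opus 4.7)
The plan is to establish the equivalence of (a) and (b) using standard functional-analytic arguments specific to $\ell^1$. Since this is the content of Theorem~44.2 of \cite{treves1967topological}, in the paper itself I would simply invoke the reference, but here I sketch the core ideas. Both directions are bridged by the notion of total boundedness, which couples compactness to the $\ell^1$-specific uniform tail decay.

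\textbf{Forward direction, $(a) \Rightarrow (b)$.} Boundedness and closedness follow from the general fact that compact subsets of a metric space are bounded and closed, so the nontrivial content is equismall at infinity. I would exploit total boundedness: for $\epsilon > 0$, fix a finite $\epsilon/2$-net $\{\sigma_1,\dots,\sigma_N\} \subset \Sigma$. Each $\sigma_j$ lies in $\ell^1$, so there is an index $n_j$ past which its tail is below $\epsilon/2$. Setting $n_\epsilon = \max_j n_j$ and applying the triangle inequality to an arbitrary $\sigma \in \Sigma$ and its nearest $\sigma_j$ yields $\sum_{n \geq n_\epsilon} |\sigma[n]| < \epsilon$ uniformly over $\Sigma$.

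\textbf{Reverse direction, $(b) \Rightarrow (a)$.} Since $\ell^1$ is complete and $\Sigma$ is closed, $\Sigma$ is a complete metric subspace, and it suffices to show total boundedness. Given $\epsilon > 0$, equismall at infinity supplies $n_\epsilon$ such that $\sum_{n \geq n_\epsilon} |\sigma[n]| < \epsilon/3$ for every $\sigma \in \Sigma$. The projection of $\Sigma$ onto its first $n_\epsilon$ coordinates is a bounded subset of $\Real^{n_\epsilon}$, hence totally bounded by Heine--Borel. I would then take a finite $\epsilon/3$-net of this projection, lift each net point to a genuine element of $\Sigma$ whose projection lies within $\epsilon/3$, and verify via the triangle inequality that the lifted finite collection is an $\epsilon$-net of $\Sigma$ in $\ell^1$-norm: the head contributes at most $2\epsilon/3$ (from the projection approximation and the lifting error) and the two tails contribute at most $\epsilon/3$ each, but after merging these are bounded by $\epsilon$ overall.

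\textbf{Main obstacle.} The key technicality is the lifting step in the reverse direction: replacing points of a finite-dimensional $\epsilon/3$-net by actual elements of $\Sigma$ without inflating the total approximation error beyond $\epsilon$. This is precisely where the uniform tail bound from equismall at infinity is indispensable, and it is what distinguishes the $\ell^1$ statement from a naive application of Heine--Borel in finite dimensions.
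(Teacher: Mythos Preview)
The paper does not prove this lemma at all: it is stated with a citation to \cite[Theorem~44.2]{treves1967topological} and used as a black box in the proof of Corollary~\ref{cor:DkbarCompact}. Your own remark that ``in the paper itself I would simply invoke the reference'' is exactly what the authors do.

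Your sketch is the standard argument and is correct in substance. One small bookkeeping wrinkle in the reverse direction: if you take the $\epsilon/3$-net in $\Real^{n_\epsilon}$ to consist of arbitrary points and then lift each to a nearby element of $\Sigma$ (incurring a second $\epsilon/3$ in the head), the final bound comes out to $4\epsilon/3$ rather than $\epsilon$. The clean fix is to take the finite net directly inside the projected image of $\Sigma$ (a totally bounded set always admits an $\epsilon$-net of its own points), so the lift is exact and the estimate $\epsilon/3 + \epsilon/3 + \epsilon/3$ goes through without the extra term. This is cosmetic and does not affect the validity of the approach.
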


\begin{corollary}
\label{cor:DkbarCompact}
$\Dkbar$ is a compact subset of $\ell^1$.
\end{corollary}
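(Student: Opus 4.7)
The plan is to invoke Lemma~\ref{th:Compactl1} and verify that $\Dkbar$ is bounded, $\ell^1$-closed, and equismall at infinity under the sequence representation~\eqref{eq:Sequence}. Boundedness is immediate from the type-marginal constraint defining $\D$: every $d \in \Dkbar$ satisfies $\|d\|_1 = \sum_{\tau,u,k} d_\tau[u,k] = \sum_\tau g_\tau = 1$, so $\Dkbar$ sits in the unit sphere of $\ell^1$.

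The equismall-at-infinity property follows from a Markov-type estimate that uses the karma constraint crucially. For any integer $K \geq 1$ and any $d \in \Dkbar$,
\[
\sum_{k \geq K}\sum_{\tau,u} d_\tau[u,k] \;\leq\; \frac{1}{K}\sum_{\tau,u,k} d_\tau[u,k]\,k \;=\; \frac{\kbar}{K},
\]
uniformly in $d$. Because the ordering in~\eqref{eq:Sequence} groups the $n_\tau n_u$ entries of karma level $k$ together, the tail $\sum_{n \geq n_\epsilon}\lvert\sigma[n]\rvert$ collapses to $\sum_{k \geq K}\sum_{\tau,u} d_\tau[u,k]$ for the corresponding threshold $K$. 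Given $\epsilon > 0$, choosing $n_\epsilon := n_\tau n_u \lceil \kbar/\epsilon \rceil$ then forces this tail strictly below $\epsilon$, uniformly over $\Dkbar$.

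Closedness of $\Dkbar$ in $\ell^1$ is where I expect the main difficulty. The ambient set $\D$ is $\ell^1$-closed, since nonnegativity is preserved pointwise and each type marginal $d \mapsto \sum_{u,k} d_\tau[u,k]$ is a bounded linear functional. The additional karma equality $\sum_{\tau,u,k} d_\tau[u,k]\,k = \kbar$ is, however, defined by a linear functional with unbounded coefficients, and so does not automatically pass to $\ell^1$-limits. My plan is to leverage the uniform equismallness already established: for any $d^{(m)} \in \Dkbar$ with $d^{(m)} \to d$ in $\ell^1$, Fatou's lemma immediately gives $\sum_{\tau,u,k} d_\tau[u,k]\,k \leq \kbar$; for the reverse inequality I would truncate the first moment at a threshold $K$, use finite-dimensional $\ell^1$-continuity on $\{k \leq K\}$, and use the uniform $\kbar/K$ bound to control the tail $\{k > K\}$ before sending $K \to \infty$. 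The crux is ruling out that first-moment mass leaks to infinity along the sequence, and this uniform tail control on first moments (rather than merely on masses) is where the bulk of the work would lie.
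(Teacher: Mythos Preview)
Your boundedness and equismallness arguments are correct and coincide with the paper's: the paper uses exactly the same Markov-type bound $\sum_{k\geq k_\epsilon}\sum_{\tau,u} d_\tau[u,k]\leq \kbar/k_\epsilon$. Where you diverge is on closedness. The paper dismisses it in one line (``trivially closed since it is an intersection of closed polytopes''), while you correctly flag that the first-moment constraint is defined by an unbounded linear functional on $\ell^1$ and therefore requires a genuine argument.

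Unfortunately the argument you sketch for closedness cannot be completed, because $\Dkbar$ is in fact \emph{not} closed in $\ell^1$ whenever $\kbar\geq 1$. Take $n_\tau=n_u=1$ for concreteness and, for $m>\kbar$, set $d^{(m)}[0]=1-\kbar/m$, $d^{(m)}[m]=\kbar/m$, and all other entries zero. Then each $d^{(m)}\in\Dkbar$ and $\|d^{(m)}-d\|_1=2\kbar/m\to 0$, where $d$ is the point mass at $k=0$; yet $d$ has first moment $0\neq\kbar$. This is precisely the ``first-moment mass leaking to infinity'' you anticipated: the zeroth-moment tail bound $\kbar/K$ gives no uniform control on first-moment tails, which in this example stay equal to $\kbar$ for every threshold $K\leq m$. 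So both your plan and the paper's one-line assertion have a genuine gap at this step. A clean repair is to replace $\Dkbar$ by the set $\{d\in\D:\sum_{\tau,u,k} d_\tau[u,k]\,k\leq\kbar\}$, which \emph{is} $\ell^1$-closed (your Fatou observation already delivers this inequality under limits), bounded, and equismall by the same estimate, hence compact; and since $W^\pi$ preserves first moments under Assumption~\ref{assumption:KP}, this enlarged set is still invariant, so the downstream fixed-point argument goes through unchanged.
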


\begin{proof}
The set $\Dkbar$ is trivially closed since it is an intersection of closed polytopes.
It is also trivially bounded, since $0 \leq d_\tau[u,k] \leq 1$ for all $d \in \Dkbar$, $[\tau,u,k] \in \Types \times \U \times \Int$.
It therefore suffices to show that it is equismall at infinity.
For any $\epsilon > 0$, choose $k_\epsilon \in \Int$ such that $\frac{\kbar}{k_\epsilon} < \epsilon$, and $n_\epsilon = n_\tau \: n_u \: k_\epsilon$.
For an arbitrary $d \in \Dkbar$, let $\{\sigma[n]\}_{n \in \Int}$ be its sequence representation as given in~\eqref{eq:Sequence}.
We have
\begin{multline*}
    \kbar = \sum_{\tau,u,k} d_\tau[u,k] \: k \geq \sum_{\tau,u,k \geq k_\epsilon} d_\tau[u,k] \: k \geq k_\epsilon \sum_{\tau,u,k \geq k_\epsilon} d_\tau[u,k] = k_\epsilon \sum_{n \geq n_\epsilon} \lvert\sigma[n]\rvert \\
    \Leftrightarrow \sum_{n \geq n_\epsilon} \lvert\sigma[n]\rvert \leq \frac{\kbar}{k_\epsilon} < \epsilon.
\end{multline*}
\end{proof}

The compactness of $\Dkbar$ will enable us to invoke an infinite dimensional version of Kakutani's fixed point theorem to establish the existence of a stationary Nash equilibrium. Before we do, we need to ensure that the fixed point correspondence maps elements of $\Dkbar$ into itself. For a fixed policy $\pi \in \Pi$, define the map $W^\pi : \D \rightarrow \D$ as the concatination of the right-hand side of condition~\eqref{eq:SNE-1} for all $[\tau,u,k] \in \Types \times \U \times \Int$, i.e.,
\begin{align}
    W_{\tau}^\pi[u,k](d) = \sum_{u^-,k^-} d_\tau[u^-,k^-] \: P_\tau[u,k \mid u^-,k^-](d,\pi). \label{eq:W}
\end{align}
That $W^\pi$ maps elements of $\D$ to itself follows trivially from the fact that $P_\tau[u^+,k^+ \mid u,k](d,\pi)$ are transition probabilities.
We further have the following lemma.

\begin{lemma}
\label{lem:DkbarInvariant}
Let Assumption~\ref{assumption:KP} hold. Then for all $\kbar \in \Int$ and $\pi \in \Pi$, $W^\pi$ maps $\Dkbar$ into itself.
\end{lemma}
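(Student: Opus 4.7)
The plan is to show that for any $d \in \Dkbar$, the image $W^\pi(d)$ lies again in $\Dkbar$, which requires two checks: (i) for every type $\tau$, the type-marginal $\sum_{u,k} W_\tau^\pi[u,k](d)$ equals $g_\tau$, so that $W^\pi(d) \in \D$; and (ii) the expected karma under $W^\pi(d)$ equals $\kbar$. Both are swap-of-summation arguments, with (ii) being the substantive step that invokes Assumption~\ref{assumption:KP}.

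For step (i), I would fix $\tau$, substitute the definition of $W_\tau^\pi$ from \eqref{eq:W}, swap the order of summation, and use that $\sum_{u,k} P_\tau[u,k \mid u^-,k^-](d,\pi) = 1$ since $P_\tau$ is a (conditional) probability distribution over the next state. This reduces the sum to $\sum_{u^-,k^-} d_\tau[u^-,k^-] = g_\tau$, which holds because $d \in \D$.

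For step (ii), I would write out
\begin{align*}
\sum_{\tau,u,k} W_\tau^\pi[u,k](d)\, k
&= \sum_{\tau,u^-,k^-} d_\tau[u^-,k^-] \sum_{u,k} k\, P_\tau[u,k \mid u^-,k^-](d,\pi),
\end{align*}
then expand $P_\tau$ using $\pi_\tau$ and $\transition_\tau$ via \eqref{eq:KarmaStateTransitions}. After summing out $u$ using $\sum_{u} \phi_\tau[u \mid u^-] = 1$, the inner sum becomes
$\sum_b \pi_\tau[b \mid u^-,k^-] \sum_o \oprob[o \mid b](d,\pi) \sum_{k^+} k^+\, \kappa[k^+ \mid k^-,b,o](d,\pi)$. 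The entire expression is therefore exactly the left-hand side of \eqref{eq:KarmaPreservationDPG}, which by Assumption~\ref{assumption:KP} equals $\sum_{\tau,u^-,k^-} d_\tau[u^-,k^-]\, k^- = \kbar$, the last equality holding since $d \in \Dkbar$.

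The proof is essentially a direct unfolding of definitions, so there is no real obstacle beyond keeping track of indices and justifying each swap of summations. The only potentially delicate point is the interchange in the (countably) infinite sums over $k$ and $k^+$, but all summands are nonnegative, so Tonelli's theorem justifies reordering without any finiteness worries; the preservation identity itself is then handed over wholesale to Assumption~\ref{assumption:KP}.
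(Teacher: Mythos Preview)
Your proposal is correct and follows essentially the same approach as the paper: the paper notes just before the lemma that $W^\pi$ maps $\D$ into itself because $P_\tau$ are transition probabilities (your step (i)), and its proof then carries out exactly your step (ii), expanding $W^\pi$ via \eqref{eq:KarmaStateTransitions}, summing out the urgency, invoking nonnegativity to swap the infinite sums, and applying \eqref{eq:KarmaPreservationDPG}. The only difference is that you make step (i) explicit inside the proof rather than remarking on it beforehand.
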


\begin{proof}
For a $d \in \Dkbar$, the average amount of karma of $W_{\tau}^\pi(d)$ is
\begin{align*}
    &\sum_{\tau,u^+,k^+} W_{\tau}^\pi[u^+,k^+](d) \: k^+ \\
    &\quad= \sum_{\tau,u,k} d_\tau[u,k] \sum_{u^+,k^+} P_\tau[u^+,k^+ \mid u,k](d,\pi) \: k^+ \\
    &\quad= \sum_{\tau,u,k} d_\tau[u,k] \sum_b \pi_\tau[b \mid u, k] \sum_o \oprob[o \mid b](d,\pi) \sum_{k^+} \kappa[k^+ \mid k, b, o](d,\pi) \: k^+ \sum_{u^+} \phi_\tau[u^+ \mid u] \\
    &\quad= \sum_{\tau,u,k} d_\tau[u,k] \sum_b \pi_\tau[b \mid u, k] \sum_o \oprob[o \mid b](d,\pi) \sum_{k^+} \kappa[k^+ \mid k, b, o](d,\pi) \: k^+ \\
    &\quad= \sum_{\tau,u,k} d_\tau[u,k] \: k = \kbar,
\end{align*}
where we used the non-negativity of the summands to exchange the order of the infinite sums~\cite{rudin1976principles}, and condition~\eqref{eq:KarmaPreservationDPG}.
Therefore, $W^\pi(d) \in \Dkbar$.
\end{proof}

We are now ready to apply the following infinite dimensional fixed point theorem to establish our main technical result: the existence of a stationary Nash equilibrium in karma dynamic population games (Theorem~\ref{th:NashExists}).

\begin{lemma}[Kakutani-Glicksberg-Fan fixed point theorem, {\cite[Theorem~8.6]{granas2003fixed}}]
\label{th:Kakutani}
Let $C$ be a compact convex subset of a locally convex Hausdorff space $E$, and let $S : C \rightarrow 2^C$ be a set-valued correspondence which is upper hemicontinuous, nonempty, compact and convex. Then $S$ has a fixed point.
\end{lemma}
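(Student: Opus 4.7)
The plan is to prove the Kakutani--Glicksberg--Fan theorem by bootstrapping the finite-dimensional Kakutani theorem up to a locally convex Hausdorff setting via a Schauder-type projection. I would first take the finite-dimensional Kakutani theorem as a starting point: for a compact convex $K \subset \Real^n$ and an upper hemicontinuous correspondence $T : K \rightarrow 2^K$ with nonempty, compact, convex values, $T$ has a fixed point. The standard proof builds continuous $\varepsilon$-approximate selections $f_\varepsilon : K \rightarrow K$ (using a partition of unity on the graph), applies Brouwer's fixed point theorem to each $f_\varepsilon$ to obtain $x_\varepsilon = f_\varepsilon(x_\varepsilon)$, and extracts a convergent subsequence whose limit is a fixed point of $T$ by the closed-graph characterization of upper hemicontinuity for compact-valued correspondences.

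Next I would lift this to the infinite-dimensional setting. Since $E$ is locally convex and Hausdorff and $C$ is compact, for every continuous seminorm $p$ from a generating family and every $\varepsilon>0$ there exist finitely many points $x_1,\dots,x_m \in C$ such that the sets $\{y : p(y-x_j)<\varepsilon\}$ cover $C$. Using a subordinate partition of unity $\{\phi_j\}$, I would define the Schauder projection
\[
    P_\varepsilon : C \rightarrow C_\varepsilon := \mathrm{conv}\{x_1,\dots,x_m\}, \qquad P_\varepsilon(x) = \sum_{j=1}^m \phi_j(x)\, x_j,
\]
which is continuous, maps into the finite-dimensional compact convex set $C_\varepsilon \subset C$, and satisfies $p(P_\varepsilon(x)-x)<\varepsilon$ for every $x \in C$.

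I would then transport the given correspondence to the approximation. Define $S_\varepsilon : C_\varepsilon \rightarrow 2^{C_\varepsilon}$ by $S_\varepsilon(x) = P_\varepsilon(S(x))$; by the continuity of $P_\varepsilon$ and the hypotheses on $S$, the correspondence $S_\varepsilon$ inherits nonemptiness, compactness, convexity, and upper hemicontinuity. The finite-dimensional Kakutani theorem yields a fixed point $x_\varepsilon \in S_\varepsilon(x_\varepsilon)$, which means $x_\varepsilon = P_\varepsilon(y_\varepsilon)$ for some $y_\varepsilon \in S(x_\varepsilon)$. Indexing by the directed set of pairs (seminorm, $\varepsilon$), compactness of $C\times C$ gives a subnet along which $x_\varepsilon \rightarrow x^\star$ and $y_\varepsilon \rightarrow y^\star$; the seminorm estimate $p(x_\varepsilon - y_\varepsilon)<\varepsilon$ applied along the net forces $x^\star = y^\star$, and the closed-graph property of $S$ (a consequence of upper hemicontinuity into a Hausdorff space with compact values) gives $x^\star \in S(x^\star)$.

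The main obstacle I anticipate is handling the locally convex, non-metrizable setting carefully: one cannot use sequential arguments, and the Schauder projection must be constructed relative to each seminorm in the defining family, requiring a directed-set argument rather than a convergent sequence. A second, subtler point is ensuring that $S_\varepsilon$ retains upper hemicontinuity when its image is composed with $P_\varepsilon$, which relies on $P_\varepsilon$ being continuous and $S$ being compact-valued so that $P_\varepsilon(S(\cdot))$ is again compact-valued. Once these technicalities are in place, the passage to the limit is routine and the closed-graph characterization of upper hemicontinuity delivers the fixed point.
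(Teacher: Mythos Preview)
The paper does not prove this lemma; it is quoted verbatim from Granas--Dugundji and used as a black box in the proof of Theorem~\ref{th:NashExists}. So there is no proof in the paper to compare your proposal against, and your outline is the standard Schauder-projection route to Kakutani--Glicksberg--Fan.

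That said, there is one genuine gap in the proposal. You claim that $S_\varepsilon(x) = P_\varepsilon(S(x))$ ``inherits \dots\ convexity'' from $S$, but the Schauder projection $P_\varepsilon(y) = \sum_j \phi_j(y)\,x_j$ is not affine in $y$ (the partition-of-unity weights $\phi_j$ depend nonlinearly on their argument), so the image of the convex set $S(x)$ under $P_\varepsilon$ need not be convex, and finite-dimensional Kakutani does not apply to $S_\varepsilon$ as written. The standard repair is to set $S_\varepsilon(x) = \mathrm{conv}\,P_\varepsilon(S(x))$; in finite dimensions the convex hull of a compact set is compact, and the hull operation preserves upper hemicontinuity of compact-valued maps. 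The limit step then needs one extra line: since $P_\varepsilon(y)-y$ lies in the convex $p$-ball $V_\varepsilon$ and $S(x_\varepsilon)$ is convex, one has $\mathrm{conv}\,P_\varepsilon(S(x_\varepsilon)) \subset S(x_\varepsilon) + V_\varepsilon$, so the approximate fixed point still satisfies $x_\varepsilon = s_\varepsilon + v_\varepsilon$ with $s_\varepsilon \in S(x_\varepsilon)$ and $v_\varepsilon \to 0$, and your closed-graph argument goes through unchanged. Alternatively, you can sidestep the issue by first taking a continuous approximate selection of $S$ (Cellina) and composing with $P_\varepsilon$, landing directly in Brouwer rather than Kakutani.
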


\begin{theorem}[Existence of a stationary Nash equilibrium in karma dynamic population games]
\label{th:NashExists}
Let Assumption~\ref{assumptions:continuity} and \ref{assumption:KP} hold.
Then for each $\kbar \in \Int$, a stationary Nash equilibrium $(\sd,\epi)$ satisfying $\sd \in \Dkbar$ is guaranteed to exist.
\end{theorem}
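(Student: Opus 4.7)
The plan is to apply the Kakutani-Glicksberg-Fan fixed point theorem (Lemma~\ref{th:Kakutani}) to a suitably chosen correspondence on the product $C = \Dkbar \times \Pi$. I equip $\Dkbar$ with the $\ell^1$ topology, under which it is compact and convex by Corollary~\ref{cor:DkbarCompact}, and equip the policy space $\Pi = \prod_{\tau,u,k} \Delta(\Bids^k)$ with the product topology. Since each simplex $\Delta(\Bids^k)$ is compact and convex, Tychonoff's theorem gives compactness and convexity of $\Pi$, so $C$ sits as a compact convex subset of the locally convex Hausdorff space $\ell^1 \times \prod_{\tau,u,k} \Real^{k+1}$, as required.

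Next, I define the correspondence $S : C \to 2^C$ by
\[
    S(d,\pi) = \{W^\pi(d)\} \times B(d,\pi),
\]
where $W^\pi$ is the map from~\eqref{eq:W} and $B(d,\pi) = \prod_{\tau,u,k} B_\tau[u,k](d,\pi)$ is the product of the state-dependent best response correspondences. Any fixed point $(\sd,\epi) \in S(\sd,\epi)$ will then simultaneously satisfy $\sd = W^{\epi}(\sd)$, which is exactly~\eqref{eq:SNE-1}, and $\epi_\tau[\cdot \mid u,k] \in B_\tau[u,k](\sd,\epi)$ for every $[\tau,u,k]$, which is~\eqref{eq:SNE-2}; such a $(\sd,\epi)$ is therefore the desired stationary Nash equilibrium with $\sd \in \Dkbar$.

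To verify the hypotheses of Lemma~\ref{th:Kakutani}, I observe first that $W^\pi(d) \in \Dkbar$ by Lemma~\ref{lem:DkbarInvariant}, so $S$ maps $C$ into $2^C$. For each triple $[\tau,u,k]$, the best response set in~\eqref{eq:BestResponse} is the set of maximizers of the linear function $\sum_b \sigma[b] \: Q_\tau[u,k,b](d,\pi)$ over the compact simplex $\Delta(\Bids^k)$; it is thus nonempty, convex, and compact. Upper hemicontinuity of each $B_\tau[u,k](\cdot)$ then follows from Berge's maximum theorem once $Q_\tau[u,k,b](d,\pi)$ is shown to be continuous in $(d,\pi)$, and this continuity is inherited through~\eqref{eq:SingleStageDeviation} from continuity of the immediate reward~\eqref{eq:KarmaRewards}, of the transition function (Assumption~\ref{assumptions:continuity}), and of the value function $V_\tau$ (established in the continuity lemma preceding Definition~\ref{def:StationaryEquilibrium}).

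The main obstacle will be rigorously establishing continuity of $W^\pi(d)$ and, more fundamentally, of the opposing bid distribution $\bdist[b'](d,\pi)$ in~\eqref{eq:Prob-bj} jointly in $(d,\pi)$ under the $\ell^1$-times-product topology, since these quantities are sums over the infinite state space and pointwise convergence $\pi_n \to \pi$ in the product topology is not a priori enough. I would handle this by a truncation argument that exploits the equismallness at infinity of $\Dkbar$ used in Corollary~\ref{cor:DkbarCompact}: given $\epsilon>0$, choose $k_\epsilon$ with $\kbar/k_\epsilon < \epsilon$ so that the bound $\sum_{k \geq k_\epsilon} d_\tau[u,k] \leq \kbar/k_\epsilon$ controls the tail uniformly in $d \in \Dkbar$, and on the finite head $\{k < k_\epsilon\}$ combine pointwise convergence of $\pi_n$ with $\ell^1$ convergence of $d_n$. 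Once this continuity is secured, $S$ is upper hemicontinuous with nonempty, convex, compact values, and Lemma~\ref{th:Kakutani} delivers the fixed point, completing the proof.
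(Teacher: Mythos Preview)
Your proposal is correct and follows essentially the same route as the paper: define the correspondence $S(d,\pi) = \{W^\pi(d)\} \times \prod_{\tau,u,k} B_\tau[u,k](d,\pi)$ on $C = \Dkbar \times \prod_{\tau,u,k} \Delta(\Bids^k)$, verify compactness and convexity of $C$ via Corollary~\ref{cor:DkbarCompact} and Tychonoff, invoke Lemma~\ref{lem:DkbarInvariant} for invariance, use Berge's maximum theorem for upper hemicontinuity, and apply Lemma~\ref{th:Kakutani}. If anything, you are more careful than the paper in explicitly isolating the joint continuity of $\bdist$ and $W^\pi$ under the $\ell^1 \times$ product topology and sketching the equismallness-based truncation argument, a point the paper's proof passes over in one line.
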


\begin{proof}
We can write the stationary Nash equilibrium conditions~\eqref{eq:SNE-1}--\eqref{eq:SNE-2} as the fixed points of the correspondence defined as
\[
    S(d,\pi) = \left(W^\pi(d), \{B_\tau[u,k](d,\pi)\}_{[\tau,u,k]}\right),
\]
where $\{B_\tau[u,k](d,\pi)\}_{[\tau,u,k]}$ is the sequence of best responses at all type-states $[\tau,u,k] \in \Types \times \U \times \Int$.
\begin{itemize}
    \item The set $C = \Dkbar \times \prod\limits_{\tau,u,k} \Delta(\Bids^k)$ is a compact subset of the locally convex Hausdorff space $E = \ell^1 \times \prod\limits_{\tau,u,k} \Real^{k+1}$, by Corollary~\ref{cor:DkbarCompact} and Tychonoff's theorem~\cite{tychonoff1930topologische}. $C$ is also trivially convex.
    \item $S$ maps $C$ into subsets of $C$, by Lemma~\ref{lem:DkbarInvariant} and the definition of the best response.
    \item $S$ is upper hemicontinuous and nonempty, by the continuity of $P_\tau(d,\pi)$ and $Q_\tau[u,k,b](d,\pi)$ in $(d,\pi)$, and Berge's maximum theorem~\cite{berge1997topological}.
    \item $S$ is compact and convex, since $W^\pi(d)$ is a singleton and $B_\tau[u,k](d,\pi)$ is the set of convex mixtures over the finite number of bids maximizing $Q_\tau[u,k,b](d,\pi)$.
\end{itemize}
It follows from Lemma~\ref{th:Kakutani} that $S$ is guaranteed to have a fixed point, which coincides with a stationary Nash equilibrium.
Since $\kbar$ above was arbitrary, this holds for each $\kbar \in \Int$.
\end{proof}

The significance of Theorem~\ref{th:NashExists} lies in establishing that karma mechanisms induce a well-posed game in which a rational behavior exists and is well defined.
Consequently, one can rigorously study the long-term social welfare implications of karma mechanisms at the stationary Nash equilibrium.
The uniqueness of the stationary Nash equilibrium, as well as whether different learning dynamics are guaranteed to converge to it, remain open research questions.

\subsection{Discussion of incentive compatibility}

We now turn to discuss the classical notion of \emph{incentive compatibility} (also known as \emph{strategy-proofness} or \emph{truthfulness}) in the context of karma mechanisms, in particular with respect to the bidding and resource allocation that happens at every interaction between the agents.
Following \cite{nisan2007algorithmic, krishna2009auction}, we say that an auction-like mechanism is incentive compatible if the optimal (selfish) action by each agent is to bid their own truthful evaluation of the contended resource, thus revealing their private preference.
Notice that, unlike classical monetary instruments, 
karma does not possess any intrinsic value, as it has no use outside of the game.
It does however acquire value as a means of exchange in the game, and one could attempt to define a notion of incentive compatiblity with respect to the value of karma given by the expected infinite horizon reward in~\eqref{eq:V-function-full}.
However we argue that such a notion of incentive compatibility is not critical for the efficiency of the resource allocation (i.e., the allocation of the resource to the highest urgency agent) at the stationary Nash equilibrium of the karma mechanism.
First, this is supported by the numerical analysis in Section~\ref{sec:NumericalAnalysis}, where near-optimal efficiency is robustly observed for a wide range of settings under all of the karma mechanism designs considered.
Second, unlike the classical monetary setting, incentive compatibility with respect to the value of karma does not guarantee efficiency.
This is because the value of karma depends on the contingent private state of the agent (the immediate urgency, but also the current karma balance and how urgent they expect to be in the future). For this reason, even if the karma mechanism was incentive compatible, a truthful bid would not be a perfect revelation of the agent's urgency.
It is important to highlight that in this work, we develop the tools to predict the strategic behavior under general karma mechanisms.
Therefore, we are able to robustly assess the resource allocation efficiency of these mechanisms when agents bid optimally according to their own self-interest, without the need for incentive compatibility as an intermediate step.

This is not to say that incentive compatibility is not a desirable property of the karma mechanism.
It will assist in the process of \emph{learning optimal policies} by providing optimal feedback when agents bid truthfully with respect to their value of karma (which also needs to be learnt; it corresponds to the value function that solves the Bellman equation given in~\eqref{eq:V-function-full}).
Moreover, it will likely lead to robustness against uncertain information about the social state.
The precise effect of the karma mechanism design on the learning process of the agents remains an exciting open research question.

\section{Modelling of karma payment and redistribution rules}
\label{sec:KarmaTransferModel}

We now revisit the karma payment and redistribution rules introduced in Section~\ref{sec:karma}, show how the karma transition function~\eqref{eq:KarmaTransitions} in Section \ref{subsubsec:StateTransitions} can be specialized to model them, and verify that they satisfy Assumption~\ref{assumptions:continuity} and \ref{assumption:KP}.

A key difference when it comes to deriving the karma transition function is whether there is no surplus karma generated by the payment rule, such as in \emph{pay bid to peer} ($\texttt{PBP}$), or whether surplus karma is generated, such as in \emph{pay bid to society} ($\texttt{PBS}$).
In the case of no surplus karma, the ego agent's karma at the next resource allocation instance is fully determined by the outcome of the current instance, making the karma transition probabilities easier to model.
In the case in which surplus is generated, then redistribution needs to occur, which in full generality might or might not happen between successive interactions of the ego agent (see Figure~\ref{fig:timing}). 
Extra care must be taken in order to guarantee that Assumption~\ref{assumption:KP} holds, and we will do so by introducing additional modelling assumptions that guarantee that the surplus karma is entirely redistributed between successive interactions of the ego agent.

\subsection{Payment rules with no surplus karma}
\label{subsec:PBP}

In the \emph{pay bid to peer} ($\texttt{PBP}$) karma payment rule,
the ego agent pays its bid if it is selected, and otherwise it gets paid the opposing bid $b'$. Consequently, the conditional probability of its next karma is
\[
    \Prob[k^+ \mid k,b,b',o] = \begin{cases}
        1, &\text{if } o = 0 \text{ and } k^+ = k - b, \\
        1, &\text{if } o = 1 \text{ and } k^+ = k + b', \\
        0, &\text{otherwise},
    \end{cases}
\]
which leads to the following karma transition function
\begin{align}
    \label{eq:kappa-PBP}
    \kappa^\texttt{PBP}[k^+ \mid k,b,o](d,\pi) = \cfrac{\sum\limits_{b'} \bdist[b'](d,\pi) \: \Prob[o \mid b, b'] \: \Prob[k^+ \mid k,b,b',o]}{\oprob[o \mid b](d,\pi)}.
\end{align}
It is straightforward to verify that~\eqref{eq:kappa-PBP} satisfies the continuity assumption (Assumption~\ref{assumptions:continuity}).
Karma preservation in expectation (Assumption~\ref{assumption:KP}) is also satisfied for all $(d,\pi)$ (that we omit from the notation), as
\begin{align*}
    &\sum_{\tau,u,k} d_\tau[u,k] \sum_b \pi_\tau[b \mid u,k] \sum_o \oprob[o \mid b] \sum_{k^+} \kappa^\texttt{PBP}[k^+ \mid k,b,o] \: k^+ \\
    &\quad= \sum_{\tau,u,k} d_\tau[u,k] \sum_b \pi_\tau[b \mid u,k] \sum_{b'} \bdist[b'] \sum_o \Prob[o \mid b,b'] \sum_{k^+} \Prob[k^+ \mid k,b,b',o] \: k^+ \\
    &\quad= \sum_{\tau,u,k} d_\tau[u,k] \sum_b \pi_\tau[b \mid u,k] \\
    &\qquad \quad \sum_{b'} \bdist[b'] \left(\Prob[o=0 \mid b,b'] \: (k - b) + \Prob[o=1 \mid b,b'] \: (k + b')\right) \\
    &\quad = \sum_{\tau,u,k} d_\tau[u,k] \: k - \sum_{b',b>b'} \bdist[b] \: \bdist[b'] \: b + \sum_{b,b'>b} \bdist[b] \: \bdist[b'] \: b' \\
    &\quad = \sum_{\tau,u,k} d_\tau[u,k] \: k.
\end{align*}

\subsection{Payment rules with surplus karma}
\label{subsec:PBS}

We make the following assumption to ease the modelling of payment rules which generate surplus karma to be redistributed to all the agents.
\begin{assumption}[Synchronous matching and redistribution]
\label{as:Synchronous}
At every time instant $t$, the whole population is randomly matched in simultaneous pairwise resource competition instances, and all surplus karma is redistributed immediately.
\end{assumption}

Under the \emph{pay bid to society} ($\texttt{PBS}$) karma payment rule, 
the ego agent pays its bid if it is selected, and pays nothing otherwise. Its conditional payment is hence given by
\[
    p^\texttt{PBS}[b,o] = \begin{cases}
        b, &\text{if } o = 0, \\
        0, &\text{otherwise}.
    \end{cases}
\]

Due to Assumption~\ref{as:Synchronous}, the average generated surplus can be computed by letting the ego agent assume the role of all the agents in the population, whose type-states are distributed as per $d$ and who follow the policies $\pi_\tau$ 
\begin{align*}
    \bar{p}^\texttt{PBS}(d,\pi) &= \sum_{\tau,u,k} d_\tau[u,k] \sum_b \pi_\tau[b \mid u,k] \sum_o \oprob[o \mid b](d,\pi) \: p^\texttt{PBS}[b,o] \\
    &=\sum_{\tau,u,k} d_\tau[u,k] \sum_b \pi_\tau[b \mid u,k] \: \oprob[o=0 \mid b](d,\pi) \: b.
\end{align*}
This gets redistributed to all the agents using the following integer-preserving redistribution rule (although other redistribution rules could be employed, as long as they redistribute the entire surplus):
\begin{itemize}
    \item distribute $\floor{p^\texttt{PBS}(d,\pi)}$ to a fraction $f^\textup{low}(d,\pi):=\ceil{\bar{p}^\texttt{PBS}(d,\pi)} - \bar{p}^\texttt{PBS}(d,\pi)$ of agents, randomly selected;
    \item distribute $\ceil{\bar{p}^\texttt{PBS}(d,\pi)}$ to the remaining fraction $f^\textup{high}(d,\pi):= 1 - f^\textup{low}(d,\pi)$ of agents.
\end{itemize}
Consequently, the probability that the ego agent receives a surplus payment of $\floor{\bar{p}^\texttt{PBS}(d,\pi)}$ (respectively, $\ceil{\bar{p}^\texttt{PBS}(d,\pi)}$) is $f^\textup{low}(d,\pi)$ (respectively, $f^\textup{high}(d,\pi)$), resulting in the following karma transition function
\begin{multline}
    \label{eq:kappa-PBS}
    \kappa^\texttt{PBS}[k^+ \mid k,b,o](d,\pi) \\
    = \begin{cases}
        f^\textup{low}(d,\pi), &\text{if } o = 0 \text{ and } k^+ = k - b + \floor{\bar{p}^\texttt{PBS}(d,\pi)}, \\
        f^\textup{high}(d,\pi), &\text{if } o = 0 \text{ and } k^+ = k - b + \ceil{\bar{p}^\texttt{PBS}(d,\pi)}, \\
        f^\textup{low}(d,\pi), &\text{if } o = 1 \text{ and } k^+ = k + \floor{\bar{p}^\texttt{PBS}(d,\pi)}, \\
        f^\textup{high}(d,\pi), &\text{if } o = 1 \text{ and } k^+ = k + \ceil{\bar{p}^\texttt{PBS}(d,\pi)}, \\
        0, &\text{otherwise}.
    \end{cases}
\end{multline}
It is straightforward to verify that~\eqref{eq:kappa-PBS} satisfies the continuity assumption (Assumption~\ref{assumptions:continuity}).
Karma preservation in expectation (Assumption~\ref{assumption:KP}) is also satisfied, as
\begin{align*}
    &\sum_{\tau,u,k} d_\tau[u,k] \sum_b \pi_\tau[b \mid u,k] \sum_o \oprob[o \mid b] \: \sum_{k^+} \kappa^\texttt{PBS}[k^+ \mid k,b,o] \: k^+ \\
    &\quad= \sum_{\tau,u,k} d_\tau[u,k] \sum_b \pi_\tau[b \mid u,k]  \\
    &\qquad \quad \left(\oprob[o=0 \mid b] \: (k - b + \bar{p}^\texttt{PBS}) + \oprob[o=1 \mid b] \: (k + \bar{p}^\texttt{PBS}) \right) \\
    &\quad= \sum_{\tau,u,k} d_\tau[u,k] \: k + \bar{p}^\texttt{PBS} - \sum_{\tau,u,k} d_\tau[u,k] \sum_b \pi_\tau[b \mid u,k] \: \oprob[o=0 \mid b] \: b \\
    &\quad= \sum_{\tau,u,k} d_\tau[u,k] \: k,
\end{align*}
where we use $f^\textup{low}\: \floor{\bar{p}^\texttt{PBS}} + f^\textup{high} \: \ceil{\bar{p}^\texttt{PBS}} = \bar{p}^\texttt{PBS}$.

\section{Numerical analysis}
\label{sec:NumericalAnalysis}

In this section, we perform a numerical analysis of karma mechanisms, providing insights on the strategic behaviors that emerge at the stationary Nash equilibrium, and their consequences on the social welfare.
We first define the social welfare measures in Section~\ref{subsec:SocialWelfare}, then analyze the performance of the mechanisms in a demonstrative case study in Section~\ref{subsec:HighUrgency}.
Finally, we test the robustness of the mechanisms to heterogeneity of the agents in Sections~\ref{subsec:HeterogeneousAlpha} and \ref{subsec:HeterogeneousUrgency}.

As detailed in Appendix~\ref{sec:Computation}, all the stationary Nash equilibria presented were computed using a dynamic equilibrium-seeking algorithm that is inspired by \emph{evolutionary dynamics in population games}~\cite{sandholm2010population,elokda2021dynamic}.

\subsection{Social welfare measures and benchmark resource allocation schemes}
\label{subsec:SocialWelfare}

In order to quantitatively assess the performance of karma mechanisms, we introduce the following social welfare measures, along with benchmark resource allocation schemes that optimize them.
As a baseline, we take a resource allocation scheme that simply allocates the resource in every competition instance based on a fair coin toss. We denote this scheme by $\texttt{COIN}$.

\subsubsection{Efficiency}

We define \emph{efficiency} as
\begin{align}
    \label{eq:Efficiency}
    \text{eff} = \lim_{T \rightarrow \infty} \frac{1}{T} \E\left[\sum_{t = 0}^{T-1} \sum_{i \in \Comp[t]} \frac{\reward^i[t]}{2}\right],
\end{align}
which is the expected average reward of the two agents involved in the infinitely repeated resource competition instances.
At the stationary Nash equilibrium $(\sd,\epi)$ of the continuous population model, \eqref{eq:Efficiency} evaluates to 
\begin{align*}
    \label{eq:NashEfficiency}
    \text{eff}(\sd,\epi) = \sum_{\tau,u,k} \sd_\tau[u,k] \: R_\tau[u,k](\sd,\epi),
\end{align*}
which is the expected average reward per resource competition instance of an ego agent assuming the role of all the agents (leveraging the stationarity of $\sd$).

A benchmark resource allocation scheme which maximizes the efficiency is known as the \emph{omniscient benevolent dictator}, who has access to the agents' private urgency and allocates the resource to the agent with highest one. We denote this scheme by $\CENTE$.

\subsubsection{Ex-post access fairness and ex-post reward fairness}

In line with the literature on randomized resource allocations (e.g., \cite{cappelen2013just}), the following \emph{ex-post fairness} measures are defined for finite time horizons $T$ and particular realizations of the repeated resource allocations\footnote{To the extent of our knowledge, ex-post fairness has not been defined in infinitely repeated settings thus far.
}.

Let $w^i_T$ be the fraction of times agent $i$ was selected to receive the resource (with respect to the times it was involved in resource competitions).
The \emph{ex-post access fairness} is defined via the standard deviation of $w^i_T$ with respect to the different agents, i.e., 
\begin{align*}
    \accessfair_T &= -\std_{l \in \N} w^l_T, &
    w^i_T &= \frac{1}{T} \sum_{s=0}^{T-1} \left[i = i^*[t^i_s]\right]. 
\end{align*}
A benchmark dynamic resource allocation scheme which aims to maximize the ex-post access fairness is one that ensures that the agents \emph{take turns} accessing the resource, by selecting the agent who has received the resource the least fraction of times in the past. We denote this scheme by $\texttt{TURN}$.

Let instead $\bar{\reward}^i_T$ be an agent's mean reward. 
Then the \emph{ex-post reward fairness} is defined as the standard deviation of $\bar{\reward}^i_T$ with respect to the different agents, i.e., 
\begin{align*}
    \rewardfair_T &= -\std_{l \in \N} \bar{\reward}^l_T, &
    \bar{\reward}^i_T &= \frac{1}{T} \sum_{s=0}^{T-1} \zeta^i[t^i_s].
\end{align*}
Notice that, in contrast to the ex-post access fairness, ex-post reward fairness cannot be evaluated without knowing the private urgency of the agents.

\subsection{Case study: homogeneous agents with rare high-urgency state}
\label{subsec:HighUrgency}

We showcase our results in a scenario where the agents are homogeneous, i.e., they all follow the same urgency process $\phi$ and have the same future discount factor $\alpha$ (and there is only one type $\tau$, which we drop from the notation in this section).
We will investigate the role of heterogeneity in karma mechanisms in Sections~\ref{subsec:HeterogeneousAlpha}--\ref{subsec:HeterogeneousUrgency}. The agents are typically lowly urgent ($u=1$), and have a rare occurrence of being highly urgent ($u=10$).
The agents can anticipate when they will be highly urgent ahead of time.
This is represented by the following urgency process:\\
\begin{minipage}[c]{0.6\textwidth}
\vspace{2mm}
\begin{tikzpicture}[->,>=stealth',shorten >=1pt,auto,node distance=2.3cm,
                    semithick]
  \tikzstyle{every state}=[fill=white,draw=black,text=black,minimum height=12.6mm,align=center]

  \node[state] (A)    {\tiny default \\ \footnotesize $u=1$};
  \node[state] (B) [right of=A,yshift=6mm] {\tiny interm. \\ \footnotesize $u=1$};
  \node[state] (C) [right of=B,yshift=-6mm] {\tiny urgent \\ \footnotesize $u=10$};

  \path (A) edge [loop left]  node {\footnotesize 0.95} (A)
        (A) edge [bend left=10] node {\footnotesize 0.05} (B)
        (B) edge [in=130,out=160,loop] node [left] {\footnotesize 0.5} (B)
        (B) edge [bend right=10]  node [above] {\footnotesize 0.5} (C)
        (C) edge [bend left=10]  node {\footnotesize 0.95} (A)
        (C) edge [bend right]  node [above right] {\footnotesize 0.05} (B);
\end{tikzpicture}
\vspace{4mm}
\end{minipage}\hfill
\begin{minipage}[c]{0.3\textwidth}
\begin{equation}
    \label{eq:UrgencyProcess}
    \begin{aligned}
        \U &= \{1, 1, 10\}, \\
        \phi &= \begin{pmatrix}
            0.95 & 0.05 & 0 \\
            0 & 0.5 & 0.5 \\
            0.95 & 0.05 & 0
        \end{pmatrix}.
    \end{aligned}
\end{equation}
\end{minipage}\\
Notice that there are two low urgency states; the first is the `default' state in which the agents find themselves most of the times, and the second is an `intermediate' state which has a high probability of transitioning to the high urgency state.

For example, \emph{$u=1$ default} represents a regular day, \emph{$u=1$ intermediate} a regular day where the agent anticipates it must go to the airport during rush hour tomorrow, and \emph{$u=10$} the day of that important trip.

\begin{figure}[bt!]
    \centering
	\includegraphics[trim=27.95416pt 0 5.97237pt  0]{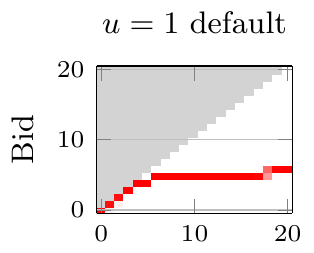}
	\hfil
	\includegraphics[trim=14.83821pt 0 13.0581pt  0]{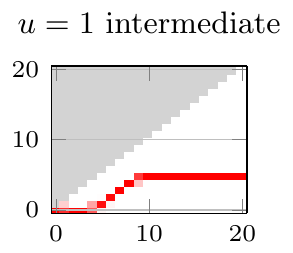}
	\hfil
	\includegraphics[trim=14.83821pt 0 5.97237pt  0]{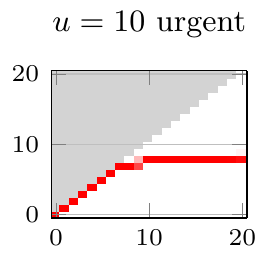}
	
	\includegraphics[trim=23.96802pt 0 10.86873pt  0]{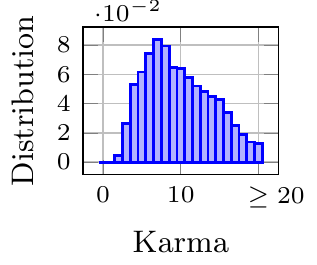}
	\hfil
	\includegraphics[trim=17.21321pt 0 10.86873pt  0]{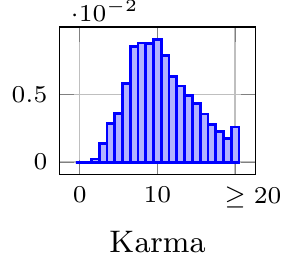}
	\hfil
	\includegraphics[trim=17.21321pt 0 10.86873pt  0]{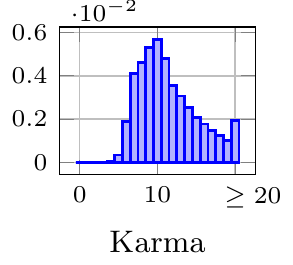}
	
	\caption{Stationary Nash equilibrium with urgency process \eqref{eq:UrgencyProcess}, karma rule $\texttt{PBS}$, future discount factor $\alpha=0.98$.}
	\label{fig:U-1-1-10-PBS-alpha-0.98}
\end{figure}

Figure~\ref{fig:U-1-1-10-PBS-alpha-0.98} shows the stationary Nash equilibrium computed
for the case when the karma payment rule is \emph{pay bid to society} (\texttt{PBS}) and the agents discount their future rewards with $\alpha=0.98$. The average amount of karma per agent is $\kbar=10$.
The top of the figure shows the equilibrium bidding policy $\epi$ at each urgency state, where for a given level of karma ($x$-axis) the intensity of the red color denotes the probabilistic weight placed on the bids ($y$-axis), and disallowed bids that exceed the karma budget are displayed gray.
The bottom of the figure shows the stationary joint urgency-karma distribution $\sd$.
The stationary Nash equilibrium exhibits multiple intuitive behaviors.
First, agents bid parsimoniously, in order to save karma for the future rather than maximize their immediate chances of success.
Second, agents bid more in the high urgency state than in the low urgency states, thereby effectively signalling their urgency.
Interestingly, the agents bid zero when they are low on karma in the intermediate low urgency state, in order to gather karma for the anticipated high urgency state.
As a consequence, high urgency agents typically have more karma.

Figure~\ref{fig:U-1-1-10-PBP-PBS-performance} shows the performance of the karma mechanisms with respect to the social welfare measures of efficiency, ex-post access fairness and ex-post reward fairness, as a function of the agents' future discount factor $\alpha$.
In generating this figure, for each value of $\alpha$, we ran agent-based simulations with $N=200$ agents who were randomly matched in a total of $T=1000$ interactions per agent and bid according to the stationary Nash equilibrium policy for $\alpha$.
Each simulation was repeated 10 times in order to construct the displayed confidence intervals.
Efficiency and ex-post access fairness are plotted jointly as a trade-off chart on the left side of the figure, and the ex-post reward fairness is plotted on the right.
We compare the performance under the karma payment rules $\texttt{PBP}$ and $\texttt{PBS}$ and the benchmark resource allocation schemes
introduced in Section~\ref{subsec:SocialWelfare}.
As expected, the best efficiency is achieved by $\CENTE$, the best ex-post access fairness by $\texttt{TURN}$, and the baseline $\texttt{COIN}$ performs poorly in all measures.
Interestingly, the performance of $\texttt{PBP}$ coincides with $\texttt{COIN}$ when the agents are fully myopic ($\alpha=0$).
In this case, the equilibrium policy can be computed in closed form\footnote{Except for this special case, it is numerically difficult to compute $\epi$ for $\texttt{PBP}$ and low values of $\alpha$, and we start at $\alpha=0.15$.};
it is a dominant strategy for the agents to bid all their karma since there is no sense in saving it for the future.
Under $\texttt{PBP}$, this leads to all the karma in the system being in the possession of one single agent at a time, rendering an essentially random allocation among all the other agents who have no karma.
In contrast, this does not occur under $\texttt{PBS}$ due to the karma redistribution, and while the bidding all behavior is not efficient also under this payment rule, it preserves some of the turn-taking capability of the karma.
In fact, the performance of $\texttt{PBS}$ dominates that of $\texttt{PBP}$ across all values of $\alpha$ and for all of the social welfare measures considered, highlighting the advantage of incorporating a redistributive scheme rather than a strictly peer to peer scheme.
This advantage comes at a price, since redistributive schemes such as $\texttt{PBS}$ requires some degree of centralization to keep track of and redistribute the surplus karma.
In many cases, however, it is natural to consider that the agents have a reasonably high value of $\alpha$, since they are expected to remain in the system for long.
Interestingly, both $\texttt{PBP}$ and $\texttt{PBS}$ perform similarly well in these cases, exposing that the performance of the karma mechanisms is robust to the specifics of the mechanism design in many reasonable scenarios.
Remarkably, both payment rules approach the optimal efficiency of $\CENTE$, without ever having to access the agents' private urgency.
At the same time, they vastly outperform $\CENTE$ both in terms of ex-post access efficiency and ex-post reward fairness, demonstrating that the karma mechanisms are successful in both achieving fair turn-taking, as well as catering to the agents' varying temporal needs.
This occurs as long as the agents do have some future discounting.
A severe degradation in the ex-post fairness occurs in the ``pathological'' case when the agents do not discount their future ($\alpha=1$).
This interesting case is discussed separately in Appendix~\ref{sec:alpha1} as it requires different analysis tools to those presented in Section~\ref{sec:Model}.

\begin{figure}[bt!]
    \centering
	\includegraphics[valign=m, trim=38.32916pt 0 0 0]{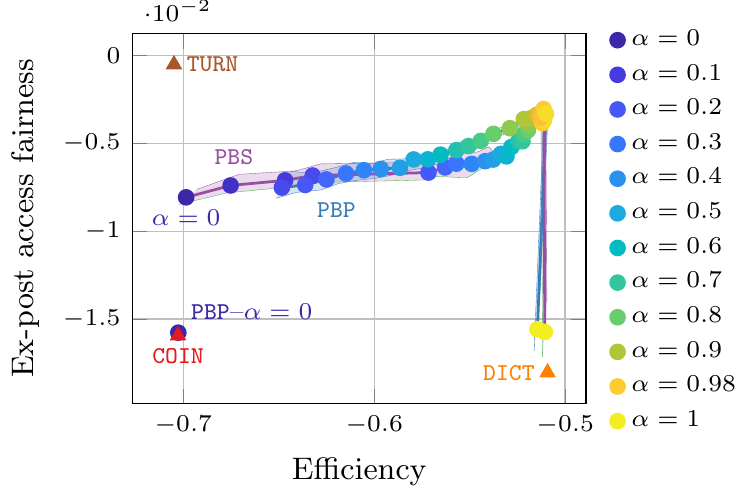}
	\hfil
	\includegraphics[valign=m, trim=0 0 0.2pt 0]{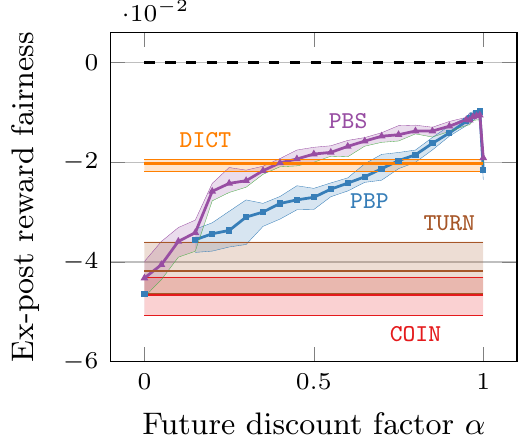}
	
	\caption{Performance of $\texttt{PBP}$ and $\texttt{PBS}$ karma payment rules when there is a rare high urgency event, as a function of the future discount factor $\alpha$.}
	\label{fig:U-1-1-10-PBP-PBS-performance}
\end{figure}


\begin{figure}[bt!]
    \centering
    \includegraphics[trim=27.95416pt 0 2.60417pt 0]{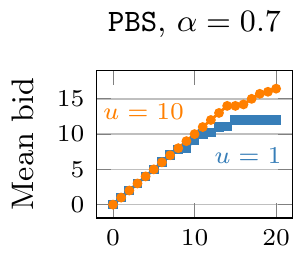}
	\hfil
	\vrule
	\hfil
    \includegraphics[trim=14.83821pt 0 2.60417pt 0]{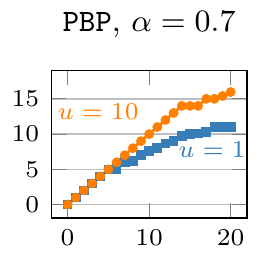}
	\hfil
	\includegraphics[trim=14.83821pt 0 2.60417pt 0]{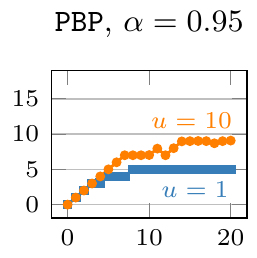}
	\hfil
	\includegraphics[trim=14.83821pt 0 2.60417pt 0]{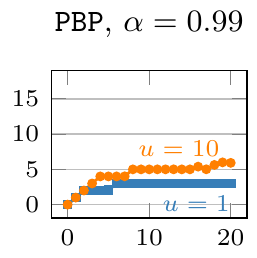}
	
	\includegraphics[trim=27.95416pt 0 10.86873pt 0]{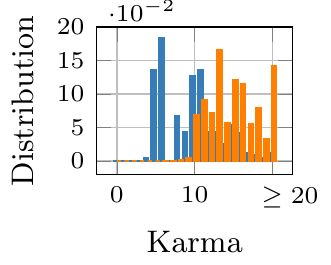}
	\hfil
	\vrule
	\hfil
	\includegraphics[trim=14.83821pt 0 10.86873pt 0]{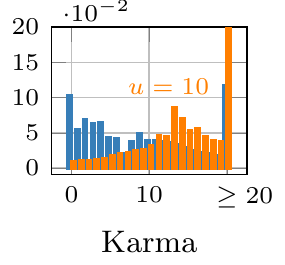}
	\hfil
	\includegraphics[trim=14.83821pt 0 10.86873pt 0]{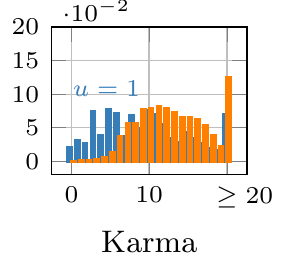}
	\hfil
	\includegraphics[trim=14.83821pt 0 10.86873pt 0]{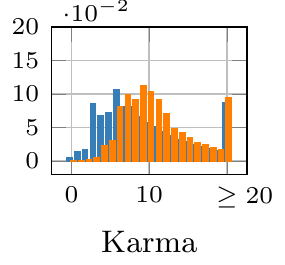}
	
	\caption{Comparison of stationary Nash equilibria under karma payment rules $\texttt{PBS}$ and $\texttt{PBP}$ for the low future discount factor $\alpha=0.7$ (left), and under $\texttt{PBP}$ for multiple future discount factors (right).}
	\label{fig:U-1-1-10-PBP-PBS}
\end{figure}

To provide insight into why $\texttt{PBS}$ outperforms $\texttt{PBP}$ for low values of the future discount factor $\alpha$, as well as why the performance of the karma mechanisms improves with increasing $\alpha$, we compare a number of stationary Nash equilibria in Figure~\ref{fig:U-1-1-10-PBP-PBS}.
Here, we compactly represent the equilibrium bidding policies through the \emph{mean bids},
and only present the results for the default low urgency and the high urgency states (omitting the intermediate low-urgency state).
The first two columns of the figure compare the stationary Nash equilibria under $\texttt{PBS}$ and $\texttt{PBP}$ for the relatively low value of $\alpha=0.7$.
Observe that under $\texttt{PBP}$, a significant mass of agents are expected to be low on karma when they are highly urgent, and therefore fail to signal their high urgency against a lowly urgent opponent, contributing to the loss of efficiency observed in Figure~\ref{fig:U-1-1-10-PBP-PBS-performance}.
In contrast, under $\texttt{PBS}$ the mass of the stationary karma distribution at the high urgency state is concentrated in a region where the agents will effectively outbid lowly urgent opponents most of the times, explaining the superior performance of $\texttt{PBS}$ in terms of efficiency.
This occurs due to the redistribution of karma, which ensures that the agents are sufficiently far from having critically low karma.

A similar mechanism is responsible for the improved efficiency at higher values of the future discount factor $\alpha$, as the rightmost three columns of Figure~\ref{fig:U-1-1-10-PBP-PBS} demonstrate by contrasting the stationary Nash equilibria under $\texttt{PBP}$ for $\alpha \in \{0.7,0.95,0.99\}$ (qualitatively similar results hold for $\texttt{PBS}$).
Instead of relying on karma redistribution, highly future aware agents learn to be sparing in the use of karma, in order to avoid the situation of being highly urgent and low on karma, in which karma loses its effectiveness as a signaling device.
This precisely exemplifies how repetition can be leveraged to align the agents' incentives, and suggests that karma is an effective instrument for this purpose.
Additionally, the mass of agents that have critically low karma values is generally much smaller for $\alpha=0.99$ than for $\alpha=0.7$ (at all urgency states), which contributes to the improved ex-post access fairness.

\subsection{Robustness to heterogeneous future discount factors}
\label{subsec:HeterogeneousAlpha}

\begin{figure}[bt!]
	\centering
	\includegraphics[trim=25.98193pt 0 10.00948pt 0]{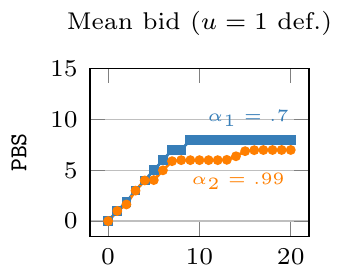}
	\hfil
	\includegraphics[trim=0 0 10.17004pt 0]{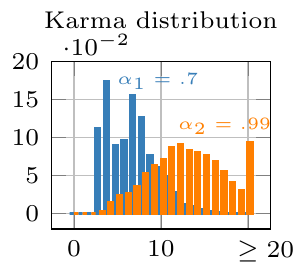}
	\hfil
	\includegraphics[trim=0 0 11.4978pt 0]{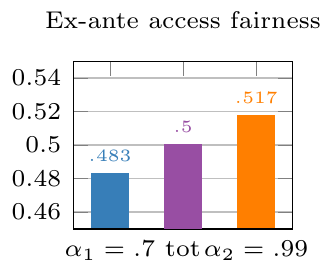}
	\hfil
	\includegraphics[trim=0 0 12.85405pt 0]{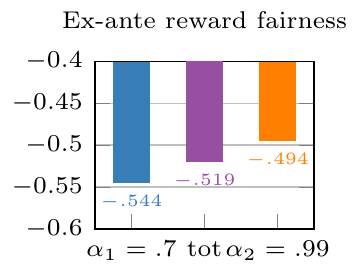}
	
	\includegraphics[trim=25.98193pt 23.02634pt 2.03955pt 0]{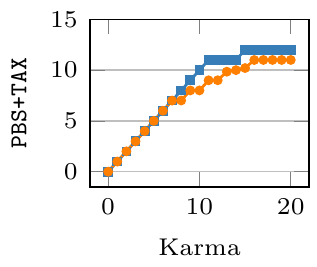}
	\hfil
	\includegraphics[trim=0 24.61179pt 10.17004pt 0]{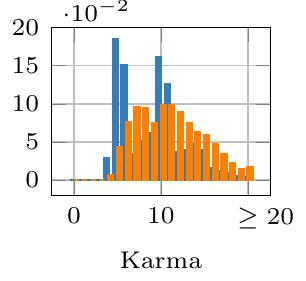}
	\hfil
	\includegraphics[trim=0 12.37706pt 7.84477pt 0]{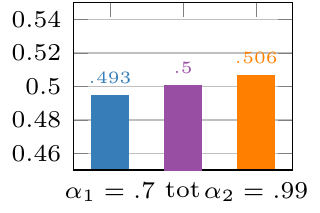}
	\hfil
	\includegraphics[trim=0 12.37706pt 7.84477pt 0]{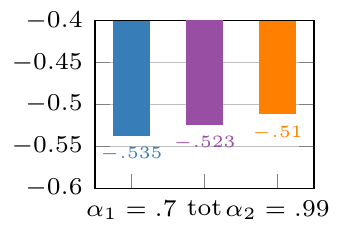}
	
	\vspace{24.61179pt}
	
	\caption{Robustness to heterogeneous future discount factors, without (top) and with (bottom) karma tax.}
	\label{fig:alpha1-0.7-alpha2-0.99}
\end{figure}

In this section, we consider a mixed population where the agents can have one of two future awareness types; half of the agents heavily discount the future reward ($\alpha_1 = 0.7$) while the other half are strongly future-aware ($\alpha_2 = 0.99$).
This heterogeneity can have one of two interpretations. In the first interpretation, the future discount factors are true representatives of the agents' objectives, e.g., the $\alpha_1$ agents are expecting to exit the system sooner than the $\alpha_2$ agents.
In the second interpretation, all the agents are expected to remain in the system for the same (infinite) time, and the heterogeneity represents differences in their \emph{strategic competence}, i.e., the $\alpha_1$ agents are \emph{less patient} than the $\alpha_2$ agents.
This is the interpretation we focus on here.
We would like to investigate the extent to which karma mechanisms are gracious with respect to this difference.

Figure~\ref{fig:alpha1-0.7-alpha2-0.99} shows stationary Nash equilibrium results for $\texttt{PBS}$ (top) and $\texttt{PBS}+\texttt{TAX}$ (bottom),
where in the latter we collect a small progressive karma tax of the form $h[k] = 0.005 \: k^2$ from all the agents and redistribute it uniformly.
The mathematical modelling of the karma tax follows similar principles as the redistributive payment rule $\texttt{PBS}$ (see Section~\ref{subsec:PBS}).
The defining feature of Figure~\ref{fig:alpha1-0.7-alpha2-0.99} is that in the untaxed case, a slight `under-bidding' behavior of the $\alpha_2$ agents leads them to accumulate significantly more karma than the $\alpha_1$ agents on the long run.
As the \emph{ex-ante access fairness} and \emph{ex-ante reward fairness} plots demonstrate\footnote{When there are multiple agent types, we may have a systematic difference in the long-run probability of access to the resource as well as the mean rewards of the different types. This \emph{ex-ante} unfairness can be quantified directly from the stationary Nash equilibrium policies and distributions of the different types.},
this leads to some degree of unfairness between the two types, with the $\alpha_2$ agents getting access to the resource a higher fraction of times, as well as experiencing higher average rewards (although the disparity is reasonably small).
Nonetheless, applying a karma tax is an effective measure to equalize this disparity, since it both disincentivizes the $\alpha_2$ agents to hold on to too much karma, and also redistributes some of that karma to the $\alpha_1$ agents.
This serves as a demonstration of the freedom that karma mechanisms give to the system designer, who has a principled tool to achieve different resource allocation objectives.

\subsection{Robustness to heterogeneous urgency processes}
\label{subsec:HeterogeneousUrgency}

Thus far in our numerical analysis we have considered that all agents have the same urgency process.
The homogeneity of the urgency process facilitates the \emph{interpersonal comparability}~\cite{roberts1980interpersonal} of utility between agents, and ultimately it allows to define a simple notion of efficiency.
It is important to notice, however, that each agent's urgency process is completely private and its only purpose is to encode the user temporal preference for when they would prefer to acquire the resource.
It enables to compare the value of the resource for the same agent at different times, more than enabling the comparison between different agents.

We have looked at the effect of introducing some \emph{invaders}\footnote{We borrow the term invaders from the standard nomenclature in evolutionary games.} in the population. We assumed that these invaders have a different urgency process in that they are in a high-urgency state more often than the nominal population. We have considered the case of a small and a large subpopulation of invaders.

The numerical results are reported in Table~\ref{tab:heterogeneous-urgency}.
It is evident that agents that present a higher frequency of the high-urgency state are not granted additional resources under the $\texttt{PBS}$ karma mechanism. 
On the contrary, the karma mechanism incentivizes agents to identify their most urgent instances parsimoniously.
In contrast, the benchmark strategy $\CENTE$ allocates additional resources to the high-urgency invading subpopulation.  This behavior illustrates what is a key feature of the proposed ``self-contained'' karma mechanism, that differentiates it from monetary schemes: fairness of the resource allocation emerges intrinsically from the mechanism and is not affected by exogenous factors of inequality between agents.
Uneven allocation of the resource is possible if desired, but it requires deliberate design choices such as non-uniform karma redistribution rules.

\begin{table}[tb!]
    \centering
    \caption{Probability of acquiring the resource in the case of agents with heterogeneous urgency processes.}
    \label{tab:heterogeneous-urgency}
    \small
    \begin{tabular}{@{}llcccc@{}}
        \toprule
         && \multicolumn{2}{c}{\texttt{PBS}} 
         & \multicolumn{2}{c}{\CENTE} \\ 
         \cmidrule(lr){3-4} \cmidrule(l){5-6}
         && \shortstack{nominal \\ agents}
         & invaders
         & \shortstack{nominal \\ agents}
         & invaders \\
         \midrule
         nominal & $10\%$ of time at $u=10$
         & 50\% & -- & 50\% & --
         \\
         \midrule
         \multirow{2}{*}{10\% invaders} & $20\%$ of time at $u=10$
         & 50.30\% & 47.34\% & 49.50\% & 54.50\%
         \\
         & $50\%$ of time at $u=10$
         & 50.71\% & 43.62\% & 48.00\% & 68.00\%
         \\
         \midrule
         \multirow{2}{*}{50\% invaders} & $20\%$ of time at $u=10$
         & 52.59\% & 47.41\% & 47.50\% & 52.50\%
         \\
         & $50\%$ of time at $u=10$
         & 53.76\% & 46.24\% & 40.00\% & 60.00\%
         \\
         \botrule
    \end{tabular}
\end{table}



\section{Conclusion}
\label{sec:Conclusion}

We have demonstrated the effectiveness of karma mechanisms for the dynamic allocation of common resources.
These mechanisms make it possible to achieve highly efficient and fair allocations when the resources are repeatedly disputed, without requiring access to the users' private preferences, and without resorting to monetary pricing, which is problematic in many important domains.
The efficiency and fairness of karma mechanisms is robustly observed in multiple numerical cases involving different mechanism designs, preference structures, and user heterogeneity.

We show that it is possible to rigorously study the strategic behavior of the users of a karma mechanism by modelling it as a dynamic population game, in which a stationary Nash equilibrium is guaranteed to exist.
We numerically investigate the karma stationary Nash equilibrium, providing insights on the strategic behaviors that emerge and on their consequences for the social welfare.
We also provide examples of how our model can be a versatile mechanism design tool for the system designer that wants to affect these behaviors and achieve different resource allocation objectives.

Future work includes applying karma mechanisms in the specific motivating use cases, which include the allocation of ride-hailing trips, autonomous intersection management, as well as traffic and/or internet congestion management.
We believe that many more applications are possible.
We would also like to investigate the surprisingly understudied notions of fairness in (infinitely) repeated resource allocations, and develop axiomatic principles and specifications to guide the design of karma mechanisms.
Moreover, our analysis suggests that karma mechanisms are robust to some types of user heterogeneity, but a comprehensive analysis of the practical effects of more forms of heterogeneity is desirable for some applications.
Finally, we remark that effective strategic play by the agents is how karma acquires value in a karma mechanism. An important open research question is how the users of a karma mechanism can learn their optimal bidding strategy from repeated play in a distributed fashion.

\section{Acknowledgement}
We would like to acknowledge anonymous reviewers whose comments and suggestions greatly improved the paper.
We are also thankful to Heinrich Nax for many fruitful discussions.
Research supported by NCCR Automation, a National Centre of
Competence in Research, funded by the Swiss National Science
Foundation (grant number $180545$).

\newpage

\begin{appendices}

\section{Computation of a stationary Nash equilibrium}
\label{sec:Computation}

Our stationary Nash equilibrium computation algorithm is motivated by the notion of \emph{evolutionary dynamics} in static population games. We write the equilibrium seeking problem as the problem of finding the rest points of the following continuous-time dynamical system:
\begin{align}
    \dot{d}_\tau &= d_\tau \: P_\tau(d,\pi) - d_\tau,  \; \forall \tau, \tag{EV.1} \label{eq:EvolutionState} \\
    \dot{\pi}_\tau[\cdot \mid u,k] &= \eta \: H^k(Q_\tau[u,k,\cdot](d,\pi),\pi_\tau[\cdot \mid u,k]), \; \forall \tau, u, k. \tag{EV.2} \label{eq:EvolutionPolicy}
\end{align}
See~\cite[Section~6]{elokda2021dynamic} for an interpretation. We use $\eta$ as a \emph{policy update rate} parameter, which controls the rate of policy changes~\eqref{eq:EvolutionPolicy} relative to the rate of state changes~\eqref{eq:EvolutionState}, and
\[
    H^k : \Real^{k+1} \times \Delta(\Bids^k) \rightarrow \Real^{k+1}
\]
is one of the familiar \emph{mean dynamics} in evolutionary game theory~\cite[Part~II]{sandholm2010population}. We choose the \emph{perturbed best response dynamic} as our mean dynamic, which allows modelling agents that are not perfectly rational.
The \emph{perturbed best response policy} is given by:
\begin{align}
    \label{eq:PerturbedBestResponse}
    \pbr_\tau[b \mid u,k](d,\pi) := \frac{\exp{(\lambda \: Q_\tau[u,k,b](d,\pi))}}{\sum_{b'} \exp{(\lambda \: Q_\tau[u,k,b'](d,\pi))}},
\end{align}
where the parameter $\lambda$ controls the degree of rationality of the agents.
For $\lambda = 0$, the perturbed best response is a uniform random distribution over all the bids.
At finite values of $\lambda$, it assigns higher probabilities to bids with higher single-stage deviation rewards in a smooth manner.
At the limit $\lambda \rightarrow \infty$, we recover the perfect best response with a uniform random distribution over all the bids maximizing the single-stage deviation rewards.
The policy update dynamics~\eqref{eq:EvolutionPolicy} are then simply
\begin{align}
    \dot{\pi}_\tau[\cdot \mid u,k] = \eta \: (\pbr_\tau[\cdot \mid u,k](d,\pi) - \pi_\tau[\cdot \mid u,k]), \; \forall u, k. \label{eq:PerturbedBestResponseEvolution}
\end{align}
Note that these dynamics lead to perturbed versions of the equilibrium policies $\epi_\tau$ at the rest points, rather than the exact policies~\cite[Section~6.2.4]{sandholm2010population}. However, for a sufficiently large $\lambda$, exact policies can be computed in practice, and we use $\lambda=1000$ in our computations\footnote{Numerical conditioning is required for high $\lambda$ to ensure that the exponential is stable. This is achieved with $\pbr_\tau[b \mid u,k](d,\pi) = \frac{\exp{(\lambda \: Q_\tau[u,k,b](d,\pi) - \max_{b^*} \lambda \: Q_\tau[u,k,b^*](d,\pi))}}{\sum_{b'} \exp{(\lambda \: Q_\tau[u,k,b'](d,\pi) - \max_{b^*} \lambda \: Q_\tau[u,k,b^*](d,\pi))}}$.}.

For computation purposes, we discretize the dynamics~\eqref{eq:EvolutionState}--\eqref{eq:EvolutionPolicy} using the discrete step size $dt$, yielding Algorithm~\ref{alg:StationaryEquilibrium}.

\begin{algorithm}[tb!]
\caption{Stationary Nash equilibrium computation}\label{alg:StationaryEquilibrium}
\textbf{Input:} Initial state distribution $d^0$ with average karma $\kbar$, initial policy $\pi^0$,  parameters $\eta$, $\lambda$, $dt$.

\textbf{Initialize:} $\sd \gets d^0$, $\epi \gets \pi^0$.

\textbf{While} $(\sd, \epi)$ not converged,
\begin{enumerate}
\item Compute:
\begin{enumerate}
\item $\bdist[b'](\sd,\epi)$, $\oprob[o \mid b](\sd,\epi)$, $\kappa[k^+ \mid k,b,o](\sd,\epi)$,
\item $\reward[u,b](\sd,\epi)$, $\transition_\tau[u^+,k^+ \mid u,k,b](\sd,\epi)$,
\item $R_\tau[u,k](\sd,\epi)$, $P_\tau[u^+,k^+ \mid u,k](\sd,\epi)$,
\item $V_\tau[u,k](\sd,\epi)$, $Q_\tau[u,k,b](\sd,\epi)$, and
\item $\pbr_\tau[b \mid u,k](\sd,\epi)$.
\end{enumerate}
\item Forward-propagate the discretized evolutionary dynamics:
\begin{align*}
    \sd_\tau &\gets (1 - dt) \: \sd_\tau + dt \: \sd_\tau \: P_\tau(\sd,\epi), \\
    \epi_\tau[\cdot \mid u,k] &\gets (1 - \eta \: dt) \: \epi_\tau[\cdot \mid u,k] + \eta \: dt \: \pbr_\tau[\cdot \mid u,k](\sd,\epi). \\
\end{align*}
\vspace*{-8mm}
\end{enumerate}
\end{algorithm}


	
	
\section{Stationary Nash Equilibrium in the case of no future discounting}
\label{sec:alpha1}

We have presented results for the case when agents do not discount their future rewards, i.e., when $\alpha_\tau=1$, which does not fit the dynamic population model of Section~\ref{sec:Model}. In particular, the expected infinite horizon reward $V_\tau(d,\pi)$~\eqref{eq:V-function-full} is not well defined for $\alpha_\tau = 1$.
Nevertheless, the case of no future discounting can be treated by considering that the agents face an \emph{average reward per time step} problem rather than a discounted problem~\cite{bertsekas2007dynamic}.
Analysis of the average reward per time step problem is intricate and we refer the reader to~\cite[Chapter~4]{bertsekas2007dynamic} for the details. For our purposes, it suffices to use the following proposition.

\begin{proposition}[\cite{bertsekas2007dynamic}~Propositions~4.2.1--4.2.2]
For a fixed social state $(d,\pi)$, if a scalar $\sigma_\tau(d,\pi)$ and a vector $Y_\tau(d,\pi)$ satisfy
\begin{multline}
    \label{eq:alpha1-V-function}
    \sigma_\tau(d,\pi) + Y_\tau[u,k](d,\pi) \\
    = R_\tau[u,k](d,\pi) + \sum_{u^+,k^+} P_\tau[u^+,k^+ \mid u,k](d,\pi) \: Y_\tau[u^+,k^+](d,\pi), \quad \forall u,k,
\end{multline}
then $\sigma_\tau(d,\pi)$ is the expected average reward per time step of the ego agent of type $\tau$ starting from any state $[u,k]$, supposing that $(d,\pi)$ is not time-varying.

Furthermore, if
\begin{multline}
    \label{eq:alpha1Optimality}
    \sigma_\tau(d,\pi) + Y_\tau[u,k](d,\pi) \\
    = \max_{b \in \Bids^k} \left[\reward[u,b](d,\pi) + \sum_{u^+,k^+} \transition_\tau[u^+,k^+ \mid u,k,b](d,\pi) \: Y_\tau[u^+,k^+](d,\pi)\right], \quad \forall u,k,
\end{multline}
then $\sigma_\tau(d,\pi)$ is the optimal expected average reward per time step of the ego agent of type $\tau$, and $\pi_\tau$ is an optimal policy for $\alpha_\tau=1$.
\end{proposition}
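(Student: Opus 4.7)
The plan is to follow the standard policy iteration argument for average-reward Markov decision processes, adapted to the countably infinite state space of the karma game. The first claim is a policy evaluation statement; the second is an optimality (verification) statement. Both will rest on iterating the Bellman-style identity and averaging over a long horizon, exploiting that $Y_\tau(d,\pi)$ is treated as a bounded vector so that its contribution vanishes after division by $T$.

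For the first claim, I would fix the social state $(d,\pi)$ (which, by hypothesis, is treated as time-invariant) and apply the recursion \eqref{eq:alpha1-V-function} repeatedly. Writing the identity in vector form as $\sigma_\tau \mathbf{1} + Y_\tau = R_\tau + P_\tau Y_\tau$ and iterating $T$ times yields
\begin{equation*}
T\sigma_\tau \mathbf{1} + Y_\tau = \sum_{t=0}^{T-1} P_\tau^t R_\tau + P_\tau^T Y_\tau.
\end{equation*}
Evaluating at an arbitrary state $[u,k]$, dividing by $T$, and letting $T\to\infty$ gives
\begin{equation*}
\sigma_\tau = \lim_{T\to\infty}\frac{1}{T}\sum_{t=0}^{T-1}(P_\tau^t R_\tau)[u,k] + \lim_{T\to\infty}\frac{1}{T}(P_\tau^T Y_\tau)[u,k].
\end{equation*}
The first limit is, by definition, the expected average reward per time step starting from $[u,k]$ under the fixed policy $\pi_\tau$; the second vanishes provided $Y_\tau$ is bounded, since $P_\tau$ is a stochastic operator on $\ell^\infty$. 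This is where I expect to have to do a little care: one must justify boundedness of $Y_\tau$ (it can be taken bounded because $R_\tau$ is bounded by $u_\textup{max}$ and only relative differences of $Y_\tau$ matter in \eqref{eq:alpha1-V-function}, so one may normalize by subtracting a constant), and one must be allowed to interchange limits and the countable sum $\sum_{k^+} P_\tau[u^+,k^+\mid u,k]\,Y_\tau[u^+,k^+]$. The boundedness of $Y_\tau$ together with dominated convergence (applied to the probability measure $P_\tau[\cdot\mid u,k]$) handles this.

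For the second claim, the argument is a coupling/comparison: replace \eqref{eq:alpha1Optimality} by the inequality obtained by plugging in any candidate policy $\pi'_\tau$ instead of the maximizing bid,
\begin{equation*}
\sigma_\tau + Y_\tau[u,k] \;\geq\; R_{\tau,\pi'}[u,k] + \sum_{u^+,k^+} P_{\tau,\pi'}[u^+,k^+\mid u,k]\,Y_\tau[u^+,k^+],
\end{equation*}
where the subscript $\pi'$ indicates substitution of the policy. Iterating exactly as above, dividing by $T$, and sending $T\to\infty$ yields $\sigma_\tau \geq \sigma_\tau(\pi')$, the average reward under $\pi'$; hence $\sigma_\tau$ upper-bounds the average reward of every policy. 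Conversely, by the first part applied to the policy $\pi_\tau$ that attains the maximum in \eqref{eq:alpha1Optimality}, the value $\sigma_\tau$ is actually achieved, so $\pi_\tau$ is an optimal policy and $\sigma_\tau$ is the optimal average reward.

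The main obstacle I anticipate is not conceptual but technical: controlling the tail term $\tfrac{1}{T}P_\tau^T Y_\tau$ in an infinite-dimensional setting. The countably infinite karma coordinate means standard finite-state MDP arguments must be reinforced with an integrability bound on $Y_\tau$. A clean route is to restrict attention to a bounded representative of $Y_\tau$ (noting that \eqref{eq:alpha1-V-function} determines $Y_\tau$ only up to an additive constant, and that the uniform reward bound $\|R_\tau\|_\infty \leq u_\textup{max}$ together with the structure of the karma dynamics permits such a bounded choice to exist whenever a solution exists at all). With that in hand, everything else reduces to the classical argument, and no further properties of $P_\tau$ beyond stochasticity are needed.
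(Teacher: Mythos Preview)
The paper does not prove this proposition at all: it is quoted verbatim from Bertsekas (Propositions~4.2.1--4.2.2) and used as a black box, with the explicit remark that ``analysis of the average reward per time step problem is intricate and we refer the reader to~\cite[Chapter~4]{bertsekas2007dynamic} for the details.'' So there is no paper proof to compare against.

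Your argument is precisely the standard one that Bertsekas gives for those propositions: iterate the Bellman identity $T$ times, divide by $T$, and let the residual $\tfrac{1}{T}P_\tau^T Y_\tau$ vanish; then repeat with the inequality version for an arbitrary competing policy to get optimality. That is the right approach and would be accepted as a proof in the finite-state setting.

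One point deserves care. You write that boundedness of $Y_\tau$ can be arranged ``by subtracting a constant,'' but subtracting a constant cannot turn an unbounded vector into a bounded one; it only fixes the additive degree of freedom. In Bertsekas' finite-state treatment, boundedness is automatic, which is why the cited propositions need no extra hypothesis. In the present countable-state setting, the proposition as stated is really a conditional (``if a scalar $\sigma_\tau$ and a vector $Y_\tau$ satisfy\ldots''), and for your averaging argument to go through you must either (i) add boundedness of $Y_\tau$ as part of the hypothesis, or (ii) prove it from the structure of the karma dynamics. You gesture at (ii) in your last paragraph, but no concrete argument is given; absent such an argument, your proof is complete only under the additional standing assumption $Y_\tau\in\ell^\infty$. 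Since the paper treats the proposition as an imported result and does not claim to have verified its hypotheses in the countable-state karma setting, this is not a discrepancy with the paper---but it is a genuine gap if you intend your write-up to stand on its own.
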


A stationary Nash equilibrium for $\alpha_\tau=1$ is then a social state $(\sd,\epi)$ where $\sd$ is stationary, i.e., \eqref{eq:SNE-1} holds, and $\epi_\tau$ satisfies~\eqref{eq:alpha1Optimality}.
Noting the similarity between~\eqref{eq:alpha1-V-function} and~\eqref{eq:V-function-full}, Algorithm~\ref{alg:StationaryEquilibrium} can be readily modified to seek a stationary Nash equilibrium for $\alpha_\tau=1$.
Instead of computing $V_\tau(d,\pi)$, we compute the pair $(\sigma_\tau(d,\pi),Y_\tau(d,\pi))$ as follows.
Noting that if $Y_\tau(d,\pi)$ satisfies~\eqref{eq:alpha1-V-function}, so will $\tilde{Y}_\tau(d,\pi) = Y_\tau(d,\pi) + y \mathbbm{1}$ for any constant offset $y$, we can eliminate this degree of freedom by fixing $Y_\tau[u_1,0](d,\pi) = 0$ for the arbitrary `default state' $[u,k]=[u_1,0]$, which allows us to re-write~\eqref{eq:alpha1-V-function} as
\begin{align*}
    Y_\tau[u_1,0](d,\pi) &= 0, \\
    \sigma_\tau(d,\pi) &= R_\tau[u_1,0](d,\pi) + \sum_{u^+,k^+} P_\tau[u^+,k^+ \mid u_1,0](d,\pi) \: Y_\tau[u^+,k^+](d,\pi), \\
    Y_\tau[u,k](d,\pi) &= R_\tau[u,k](d,\pi) + \sum_{u^+,k^+} P_\tau[u^+,k^+ \mid u,k](d,\pi) \: Y_\tau[u^+,k^+](d,\pi)\\
    &\phantom{=} \quad  - \sigma_\tau(d,\pi), \quad \forall u,k.
\end{align*}
This system of equations can be solved iteratively by using an initial guess $Y_\tau^0(d,\pi)$ and updating $\sigma_\tau(d,\pi)$ and $Y_\tau(d,\pi)$ until convergence, a procedure known as \emph{relative value iteration}~\cite[Section~4.3.1]{bertsekas2007dynamic}.
The vector $Y_\tau(d,\pi)$ can be interpreted as a relative rewards vector, with $Y_\tau[u,k](d,\pi)$ representing transient reward advantages/disadvantages of state $[u,k]$ with respect to the `default state' $[u_1,0]$.
With this, we can define the single-stage deviation rewards $Q_\tau[u,k,b](d,\pi)$~\eqref{eq:SingleStageDeviation} with respect to $Y_\tau(d,\pi)$ instead of $V_\tau(d,\pi)$, and proceed with the remainder of Algorithm~\ref{alg:StationaryEquilibrium} unmodified.

\begin{figure*}[tb!]
    \centering
	
	\includegraphics[trim=27.95416pt 0 16.01173pt 0]{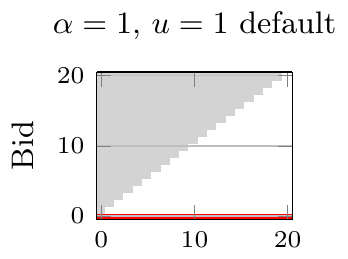}
	\hfil
	\includegraphics[trim=16.53604pt 0 16.53604pt 0]{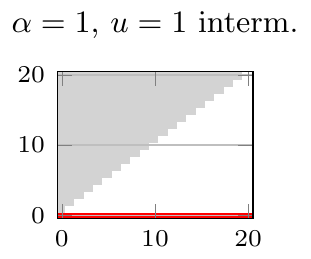}
	\hfil
	\includegraphics[trim=17.30687pt 0 17.30687pt 0]{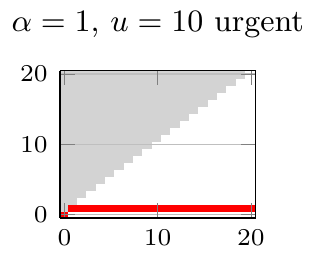}
	
	\includegraphics[trim=27.95416pt 0 10.86873pt 0]{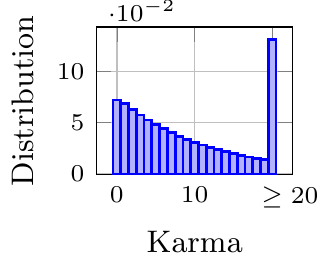}
	\hfil
	\includegraphics[trim=17.21321pt 0 10.86873pt 0]{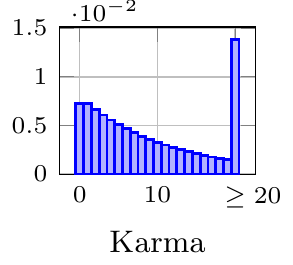}
	\hfil
	\includegraphics[trim=17.21321pt 0 10.86873pt 0]{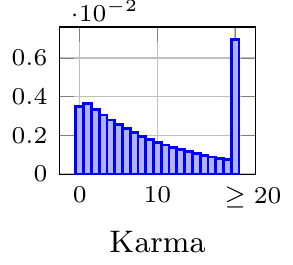}
	
	\caption{Stationary Nash equilibrium when agents do not discount the future ($\alpha=1$).}
	\label{fig:PBS-alpha-1}
\end{figure*}

Figure~\ref{fig:PBS-alpha-1} shows the stationary Nash equilibrium under karma payment rule $\texttt{PBS}$ when all agents have $\alpha=1$ (we drop the subscript $\tau$ here because there is only one type), computed using this method.
The defining feature of the equilibrium bidding policy is that it is very sparing on karma, with the agents bidding 1 only in the high urgency state and zero otherwise, no matter how much karma they have (unless they have zero karma).
While on first investigation, this behaviour resembles a truthful declaration of the urgency, and indeed $\alpha=1$ performs well for the efficiency, it has a few negative consequences.
First, it leads to a high stationary mass of the high urgency, zero karma state, which means that a high fraction of times the high urgency agents do not have one karma to bid.
Second, the homogeneity of the bids leads to the karma losing its effectiveness as a turn-taking device, and results in the poor ex-post access fairness observed in Figure~\ref{fig:U-1-1-10-PBP-PBS-performance} at $\alpha=1$.

As we explored in Section~\ref{subsec:HeterogeneousAlpha}, an effective measure against this \emph{karma hoarding} behavior is to let the karma expire, e.g., by applying a karma tax.




\end{appendices}


\bibliography{bibliography}


\end{document}